\documentclass{article}

\usepackage{microtype}
\usepackage{graphicx}
\usepackage{subcaption}
\usepackage{booktabs} 

\usepackage{hyperref}

\usepackage[preprint]{icml2026}

\usepackage{amsmath}
\usepackage{amssymb}
\usepackage{mathtools}
\usepackage{amsthm}
\usepackage{amsfonts}
\usepackage{multirow}
\usepackage{graphicx}
\usepackage{booktabs}
\usepackage{subcaption}

\usepackage[capitalize,noabbrev]{cleveref}

\theoremstyle{plain}
\newtheorem{theorem}{Theorem}[section]

\theoremstyle{definition}

\theoremstyle{remark}

\usepackage[textsize=tiny]{todonotes}

\icmltitlerunning{HPE: Hallucinated Positive Entanglement for Backdoor Attacks in Federated Self-Supervised Learning}

\begin{document}

\twocolumn[
  \icmltitle{HPE: Hallucinated Positive Entanglement for Backdoor Attacks in Federated Self-Supervised Learning}



  \icmlsetsymbol{equal}{*}

  \begin{icmlauthorlist}
    \icmlauthor{Jiayao Wang}{1}
    \icmlauthor{Yang Song}{1}
    \icmlauthor{Zhendong Zhao}{2}
    \icmlauthor{Jiale Zhang}{1}
    \icmlauthor{Qilin Wu}{3}
    \icmlauthor{Wenliang Yuan}{4}
    \icmlauthor{Junwu Zhu}{1}
    \icmlauthor{Dongfang Zhao}{5}
  \end{icmlauthorlist}

  \icmlaffiliation{1}{School of Information Engineering, Yangzhou University, China}
  \icmlaffiliation{2}{Institute of Information Engineering, Chinese Academy of Sciences, China}
  \icmlaffiliation{3}{School of Computing and Artificial Intelligence, Chaohu University, China}
  \icmlaffiliation{4}{College of Data Science, Jiaxing University, China}
  \icmlaffiliation{5}{Tacoma School of Engineering and Technology, University of Washington, USA}

  \icmlcorrespondingauthor{Junwu Zhu}{jwzhu@yzu.edu.cn}

  \icmlkeywords{Machine Learning, ICML}

  \vskip 0.3in
]



\printAffiliationsAndNotice{}  

\begin{abstract}
Federated self-supervised learning (FSSL) enables collaborative training of self-supervised representation models without sharing raw unlabeled data. 
While it serves as a crucial paradigm for privacy-preserving learning, its security remains vulnerable to backdoor attacks, where malicious clients manipulate local training to inject targeted backdoors. 
Existing FSSL attack methods, however, often suffer from low utilization of poisoned samples, limited transferability, and weak persistence.
To address these limitations, we propose a new backdoor attack method for FSSL, namely Hallucinated Positive Entanglement (HPE). 
HPE first employs hallucination-based augmentation using synthetic positive samples to enhance the encoder’s embedding of backdoor features. 
It then introduces feature entanglement to enforce tight binding between triggers and backdoor samples in the representation space. 
Finally, selective parameter poisoning and proximity-aware updates constrain the poisoned model within the vicinity of the global model, enhancing its stability and persistence.
Experimental results on several FSSL scenarios and datasets show that HPE significantly outperforms existing backdoor attack methods in performance and exhibits strong robustness under various defense mechanisms.
\end{abstract}

\section{Introduction}
Self-supervised learning (SSL)~\cite{SSL01Moco,SSL02Simclr,SSL03BYOL} offers a scalable solution to representation learning by leveraging supervisory signals inherently available in unlabeled data, thus eliminating the need for manual annotations.
Its core idea is to design pretext tasks or contrastive objectives to guide the model in learning discriminative and generalizable representations. 
SSL typically employs a Siamese network structure to minimize the feature distance between different views of the same instance, thereby enhancing semantic consistency. 
This paradigm has demonstrated performance comparable to supervised learning across various domains, including vision~\cite{SSL01Moco}, language~\cite{language}, and speech~\cite{speech}.

\begin{figure}[t]
  \centering
  \includegraphics[width=1\linewidth]{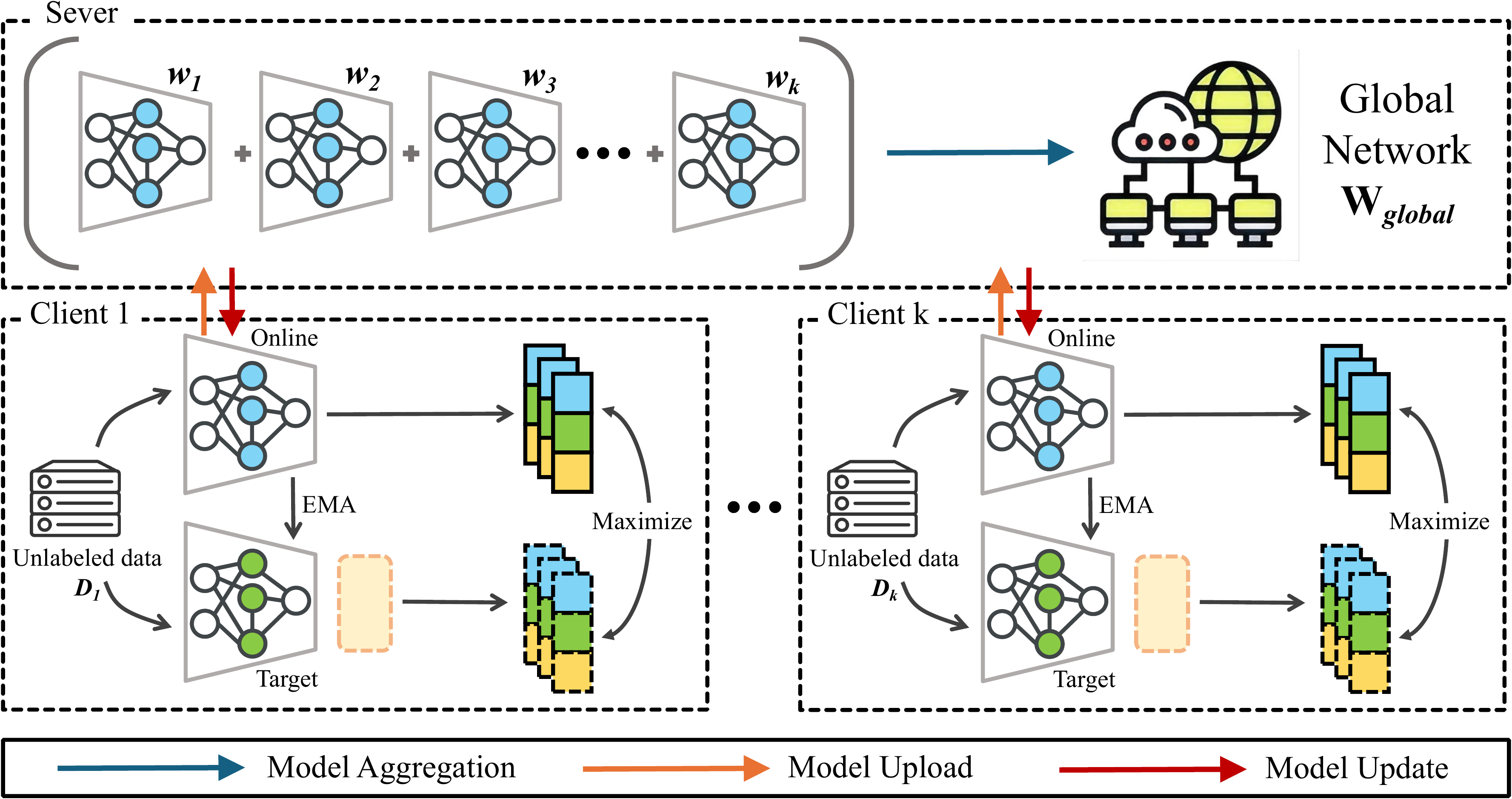}
   \caption{Framework of federated self-supervised learning.}
   \label{fig:one}
\end{figure}

Although SSL has achieved impressive results across diverse tasks, its success often hinges on access to large-scale unlabeled data.
In resource-constrained or privacy-sensitive scenarios, this dependency may become a performance bottleneck. 
Moreover, centralized data collection raises growing concerns about privacy breaches and regulatory compliance~\cite{security}. 
In this context, federated self-supervised learning (FSSL) introduces a distributed training paradigm that alleviates data isolation and privacy risks, enabling effective SSL model training without centralized data aggregation~\cite{FSSL01,FSSL02}.
\Cref{fig:one} presents an overview of the generic FSSL paradigm, which comprises an end-to-end training pipeline with the following steps: 
1) \textbf{Local Training:} Each client $i$, $1 \le i \le k$, trains locally on its unlabeled data $D_i$ using an online encoder and a target encoder.
2) \textbf{Model Upload:} The client $i$ uploads its trained online encoder to the server.
3) \textbf{Model Aggregation:} The server aggregates all client encoders to generate a global encoder $W_{global}$.
4) \textbf{Model Update:} Clients update their local online encoders with the global encoder for the next round of training.
FSSL executes several iterations of the above steps until the global model converges and achieves satisfactory performance.

However, the decentralized architecture of FSSL may inherit the security vulnerabilities of traditional federated supervised learning (FSL). 
Recent studies~\cite{bafssl01BADFSS,bafssl02EmInspector} have shown that FSSL is vulnerable to backdoor attacks, where malicious clients inject trigger-embedded poisoned samples during local training, covertly implanting backdoor features that compromise the global encoder through model upload and aggregation.
Specifically, Zhang \emph{et al.} proposed BADFSS~\cite{bafssl01BADFSS}, which implants backdoor triggers via supervised contrastive learning and attention alignment. 
Although effective, BADFSS assumes identical datasets for pre-training and downstream tasks, limiting its applicability in real-world scenarios. 
Moreover, its reliance on a high poisoning rate reduces both stealth and efficiency.
Qian \emph{et al.} introduced EmInspector~\cite{bafssl02EmInspector}, a defense framework that detects backdoors by analyzing local embedding spaces and formalizes two attack modes: single-pattern and coordinated-pattern. 
While the latter improves attack robustness via multi-client cooperation, EmInspector fails to consistently preserve backdoor features across rounds due to aggregation-induced dilution.

Based on the above analysis, backdoor attacks in FSSL face two key challenges: (i) low utilization efficiency of poisoned samples, limiting their impact under data-scarce conditions; and (ii) dilution of backdoor features during multi-round aggregation, undermining attack persistence.
To overcome these challenges, we propose HPE, a novel backdoor attack framework designed to unify effective injection with persistent propagation.
HPE first incorporates a hallucination enhancement module to generate synthetic positive samples, thereby amplifying the influence of backdoor cues within the learned representation space. 
It then introduces a feature entanglement mechanism to reinforce the semantic alignment between poisoned instances and their triggers. 
Finally, a dual-constrained poisoning strategy is employed to preserve proximity to the global model while concentrating perturbations on low-variance parameters, enhancing both the stealth and durability of the attack.

In summary, this paper makes the following contributions:
\begin{itemize}
    \item We propose HPE (Hallucinated Positive Entanglement), a backdoor attack framework tailored for visual tasks in FSSL. HPE integrates a hallucination enhancement module to synthesize positive samples and employs a feature entanglement mechanism to effectively improve the utilization of limited poisoned data and the consistency of backdoor triggers.

    \item To mitigate the dilution of backdoor signals during aggregation, we propose a dual-constrained poisoning strategy that constrains parameter divergence from the global model and injects perturbations into low-variance dimensions. This design facilitates stealthy implantation and reliable propagation of the backdoor, enhancing both its robustness and persistence across training rounds.

    \item We conduct extensive experiments on five widely-used vision benchmark datasets: CIFAR-10, GTSRB, STL-10, CIFAR-100 and ImageNet-100. Results demonstrate that HPE consistently outperforms state-of-the-art methods, demonstrating strong stability and generalizability. Moreover, HPE effectively evades existing aggregation-based defenses and backdoor detection mechanisms.
\end{itemize}

\section{Related Work}

\textbf{Federated Self-Supervised Learning.} 
An FSSL system typically comprises two key components: a client-side pretext task that extracts representations from unlabeled data, and a server-side aggregation algorithm that integrates local models uploaded by clients. 
These components jointly enable effective distributed learning in the absence of labeled data.
Among pretext tasks, contrastive learning (CL) is widely adopted, as exemplified by methods like SimCLR~\cite{SSL02Simclr} and MoCo~\cite{SSL01Moco}.
CL trains the model to distinguish between positive and negative pairs, enabling it to learn meaningful feature representations and capture the underlying structure of the data.
For aggregation algorithms, FedAVG~\cite{FL01} remains the dominant approach, averaging client models into a global update. 
Given the Non-IID nature of client data, consistency in model aggregation has attracted considerable interest~\cite{FSSL01,FSSL02,FedMKD}. Nevertheless, prior work primarily addresses data heterogeneity, with limited attention to backdoor threats in FSSL.\\
\textbf{Backdoor Attacks in FL \& SSL.} 
In FL backdoor attacks, the attacker aims to manipulate local training data or model updates of selected clients, causing the global model to behave maliciously when presented with specific trigger inputs. 
Extensive studies~\cite{BAFL01,BAFL02,BAFL03} have shown that FL is highly susceptible to backdoor attacks.
Meanwhile, in SSL, the absence of labels renders traditional label-based backdoor strategies inapplicable, prompting the development of unsupervised poisoning methods under centralized settings~\cite{BASSL,Badencoder,CTRL}, which aim to implant backdoors in the latent representation space.
However, existing methods mainly target FSL or centralized SSL scenarios, leaving backdoor attacks in FSSL largely underexplored.\\
\textbf{Backdoor Defenses.} 
Backdoor attacks have drawn increasing attention due to their stealth and high impact, posing serious challenges to model robustness during training.
Existing defenses can be broadly categorized into three types.
Model-level defenses, such as Neural Cleanse~\cite{Neural_cleanse} and DECREE~\cite{DECREE}, aim to reverse-engineer triggers and recover target labels, where successful reconstruction indicates a compromised encoder. 
Sample-level defenses, like Beatrix~\cite{Beatrix} and GradCAM~\cite{gradcam}, detect poisoned inputs by evaluating their impact on model behavior, through prediction deviation and activation localization, respectively.
Aggregation-level defenses analyze model updates in FL settings, such as FLARE~\cite{flare}, Foolsgold~\cite{Foolsgold}, and FLAME~\cite{FLAME}.
We employ the above state-of-the-art defenses to evaluate our proposed attack.

\section{Preliminary}

\subsection{Formal Problem Definition}
\textbf{FSSL System Model.}
We consider a typical FSSL setting where \(K\) clients collaboratively train a global SSL model without sharing raw data.
In each communication round, client \(i\) trains a local SSL model using its unlabeled dataset $D_i = \{x_1, \ldots , x_{|D_i|}\}, i \in \{1, 2, \ldots , K\}$.
In this work, we adopted the MoCo framework, a contrastive learning method known for efficient and effective representation learning. 
The core idea is to learn robust feature representations by maximizing the agreement between two augmented views (positives) of the same image while distinguishing them from other images (negatives).
Given input \(x\), MoCo applies two random augmentations $A_q, A_k \in \mathcal{A}$ to generate a query $x_q$ and a key $x_k$. 
These are encoded by a query encoder \(f\) and a momentum-updated key encoder \(g\) to obtain features \( v_{q} = f(x_q),\) and \(v_{k} = g(x_k)\).
MoCo also maintains a queue $Q = \{v_i^Q\}$ of \(M\) past key features as negatives.
The model is trained using the InfoNCE loss:
\begin{equation}
\mathcal{L}_{CL} = - \log \frac{\exp(v_q^\top \cdot v_k^{+} / \tau)}{\exp(v_q^\top \cdot v_k^{+} / \tau) + \sum_{i=1}^M \exp(v_q^\top  \cdot v_i^Q / \tau)},
\end{equation}
where $v_k^+$ is the positive corresponding to $v_q$, $\tau$ is the temperature parameter and exp(\(\cdot\)) is the natural exponential function. 
This contrastive objective enables the model to learn discriminative representations from unlabeled data in a federated setting.\\
\textbf{Backdoor Attack Objective.}
In FSSL, the goal of backdoor attacks is to establish an implicit association between a trigger pattern \( \Delta \) and a target class \( y_t \), such that during downstream tasks, inputs containing the trigger are misclassified as \( y_t \). This attack remains effective even under unsupervised training conditions.

To achieve this, the attacker selects several samples from the target class \( y_t \) within the local dataset, embeds the trigger \( \Delta \), and constructs poisoned samples \( x \oplus \Delta \).
During training, the attacker optimizes the parameters \( \theta \) of the local encoder \(E\)  such that the representation of the poisoned sample \(E\left(x \oplus \Delta ; \theta\right)\) is as close as possible to the representation of clean samples from the target class \(E\left(x_{y_t} ; \theta\right)\).
This objective can be formulated as:
\begin{align}
\arg\max\limits_{\theta} \frac{1}{\left|D_{p}\right|} \sum_{x \in D_{p}} s\left(E\left(x \oplus \Delta ; \theta\right), E\left(x_{y_t} ; \theta\right)\right),
\end{align}\\
where $s(\cdot,\cdot)$ denotes a similarity function, and $E$ is the encoder.
Note that we only maximize the feature similarity between backdoor samples and the target class, rather than directly associating samples with a specific label, as we have no knowledge of the downstream tasks. 

To preserve the stealthiness of the attack, the attacker must also ensure that the representations of clean inputs produced by the backdoored model remain consistent with those of the benign model. This objective can be formalized as follows:
\begin{align}
\arg \max_{\theta} \frac{1}{|D_c|} \sum_{x \in D_c} s(E_b(x), E_c(x)),
\end{align}\\
where $D_c$ is the set of clean samples, and $E_b$, $E_c$ denote the backdoored and clean encoders, respectively.
\subsection{Threat Model}
Following prior work~\cite{bafssl01BADFSS,bafssl02EmInspector}, we characterize our threat model in terms of the attacker's goals and capabilities.\\
\textbf{Attacker Objective.}
The attacker’s goal is to embed a backdoor into the encoder during pre-training, causing downstream classifiers to misclassify trigger inputs as the target class. 
In SSL, this requires implicitly associating the trigger with the target class due to the absence of labels.
In FSSL, the attacker further aims to make the global encoder inherit the backdoor, with two key objectives: \textit{stealthiness goal}, ensuring the backdoor evades detection and preserves clean accuracy; and \textit{persistence goal}, maintaining effectiveness across training rounds even if the encoder is updated.\\
\textbf{Attacker Capability.} 
The adversary masquerades as a benign participant with knowledge of the FL system, including the model architecture, global model, and local datasets. Furthermore, the adversary can manipulate local datasets (e.g., embed triggers), redesign the loss function, and arbitrarily modify the model training process. However, the attacker typically lacks access to other critical components of the FSSL system, such as the models of benign clients and the server-side aggregation rules.

\section{Methodology}
\subsection{Design of HPE}
\Cref{fig:two} illustrates the HPE framework, which centers on two key components: hallucination enhancement and backdoor feature entanglement.\\
\textbf{Motivation.}
Backdoor injection in FSSL presents unique challenges distinct from those in centralized SSL and FSL.
Attackers are constrained by limited poisoning sources and lack label supervision to anchor trigger semantics effectively.
Moreover, federated aggregation dilutes backdoor signals, while random augmentations (e.g., cropping) disrupt trigger consistency in latent representations.\\
\textbf{Hallucination Enhancement.} 
To amplify the poisoning effect under limited access to poisoned data, we propose a hallucination enhancement module, inspired by~\cite{hallucination,halp}, which synthesizes hard positive samples to enrich latent backdoor representations.
The core idea is to expand the distribution of poisoned samples in the feature space while preserving semantic consistency, thus strengthening the encoder’s ability to embed latent backdoor patterns, with detailed implementation procedures provided in \Cref{alg:algorithm}.

\begin{figure}[t]
  \centering
  \includegraphics[width=1\linewidth]{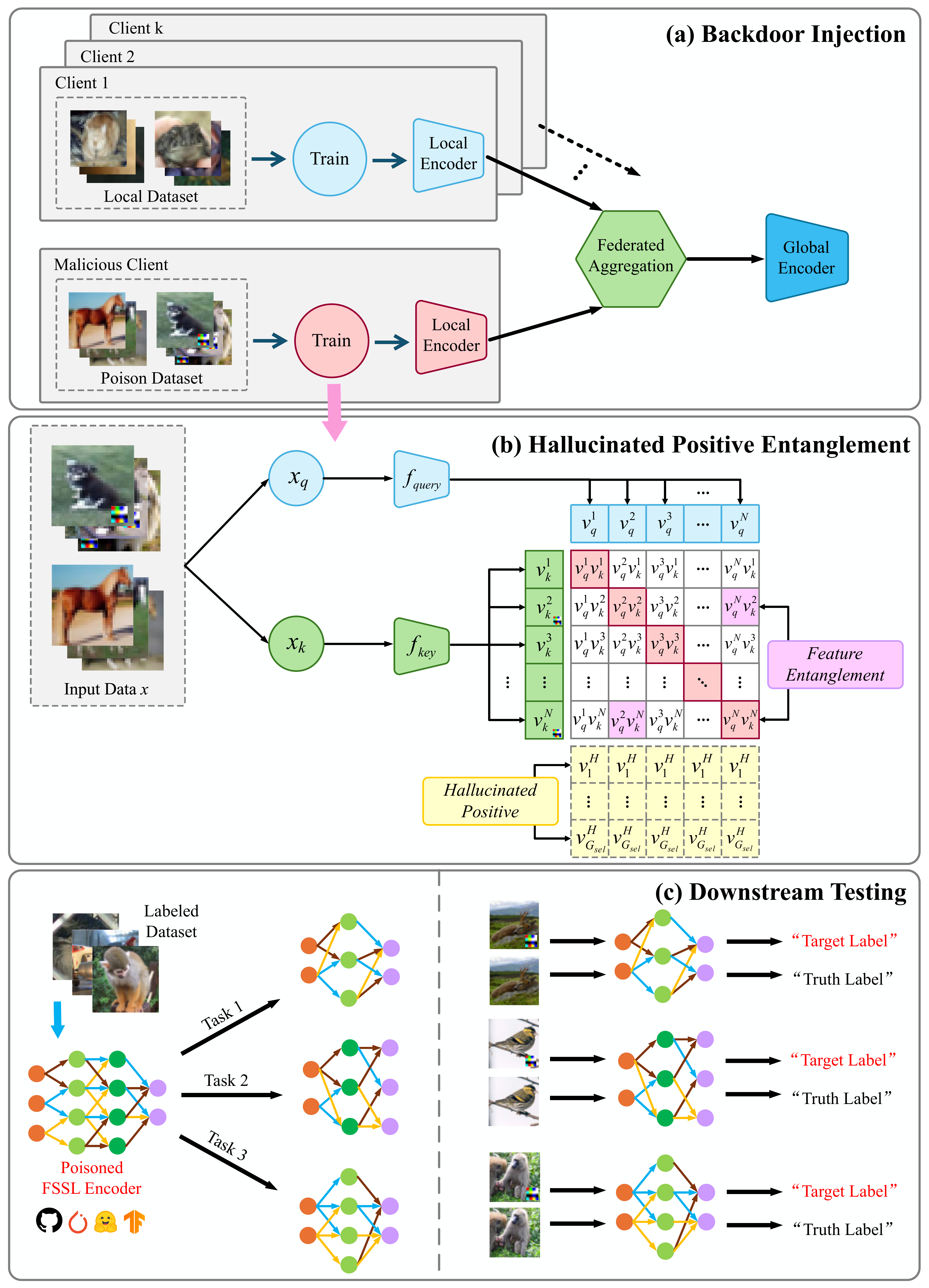}
  \caption{Framework of HPE.}
  \label{fig:two}
\end{figure}
\begin{algorithm}[tb]
\caption{HPE: Hallucinated Positive Entanglement}
\label{alg:algorithm}
\textbf{Input}: $x_k$, $x_q$, encoders $f$, $g$, memory queue $Q$\\
\textbf{Output}: $\mathcal{L}_{HE}, \mathcal{L}_{BFE}$
\begin{algorithmic}[1] 
\STATE Extract features: $v_k = g(x_k)$, $v_q = f(x_q)$
\STATE Construct prototypes via clustering:\\
\hspace{1em} $\mathcal{P} = \{P_1, \dots, P_L\} = \text{Cluster}(\text{top-}K(Q))$
\STATE Find the closest prototype to anchor:\\
\hspace{1em} $P^*_{v_k} = \arg\max_{P \in \mathcal{P}} \text{sim}(v_k, P)$
\STATE Select a prototype to step towards:\\
\hspace{1em} $P_{\text{base}} = \text{random}(\mathcal{P})$
\STATE Compute angular distance:\\
\hspace{1em} $\cos \Phi = v_k^\top P_{\text{base}}$
\STATE Search interpolation bound $t^*$ using Eq.~(6)
\STATE Sample hardness level $\lambda$:\\
\hspace{1em} $t_c \sim \text{uniform}(0, \lambda t^*)$
\STATE Generate hallucinated positive:\\
\hspace{1em} $v_i^H = v_k + d(t_c, P_{\text{base}}, v_k)$
\STATE Compute $\mathcal{L}_{HE}$ using Eq.~(7)
\STATE Sample poisoned positives $v_m^+$ from $\mathcal{B} \subset D_p^{y_t}$
\STATE Compute $\mathcal{L}_{BFE}$ using Eq.~(8)
\STATE \textbf{return} $\mathcal{L}_{HE}, \mathcal{L}_{BFE}$
\end{algorithmic}
\end{algorithm}

Specifically, given the latent representation of a poisoned anchor sample $v_k$, we aim to construct a set of hallucinated hard positives $v^H$, which should be semantically close to $v_k$ but feature-wise distant. 
We begin by projecting the high-dimensional features into a normalized hypersphere and applying k-means clustering to obtain a set of prototypes $\mathcal{P} = \{P_1, \cdots, P_L\}$, where each prototype represents a semantic class center.
We define the following optimization objective to generate prototype-constrained positives:
\begin{equation}
\begin{aligned}
v_{k}^{*} = \arg\max\limits_{v} \left[ \text{sim}(v, P_{v_k}^{*}) - \max_{P \in \mathcal{P} \setminus P_{v_k}^{*}} \text{sim}(v, P) \right],
\label{eq4}
\end{aligned}
\end{equation}\\
where $P^*_{v_k}$ is the closest prototype to anchor $v_k$, and sim(\(\cdot\),\(\cdot\)) denotes cosine similarity. 

\begin{figure}[t]
  \centering
  \includegraphics[width=1\linewidth]{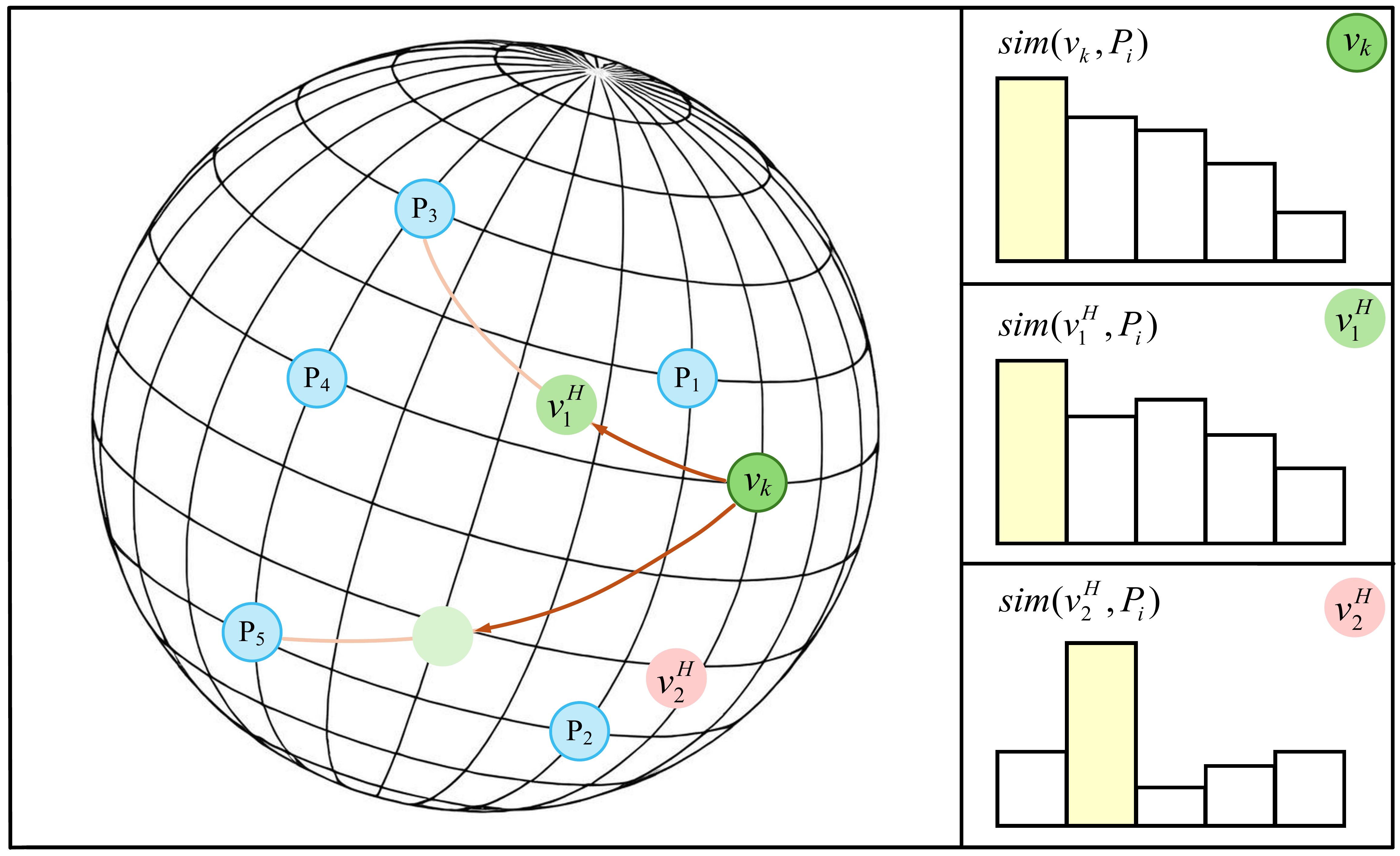}
   \caption{ Intuition behind Hallucination Enhancement. Our objective function (see Eq.~\eqref{eq4}.) enforces the constraint that the hallucinated positives and the original anchor ($v_k$) have the same closest prototype ($P_1$) while minimizing the similarity to $v_k$. For example, $v_2^H$ does not satisfy the constraint while $v_1^H$ does.}
   \label{fig:three}
\end{figure}

Through this optimization process, we aim to generate hard positives that are far from the anchor but still share the same closest prototype.
We refer to this constraint as our Hierarchy Selector, which is illustrated in \Cref{fig:three}.
However, the complexity of the original objective renders direct optimization computationally intensive and impractical for each individual sample.
Therefore, we propose a series of simplifications to the aforementioned objective.
We constrain the search space to the geodesic curve connecting $v_k$ and a randomly sampled prototype $P_{\text{base}}$. Let \(\Phi = \arccos (P_{\text{base}}^\top v_k)\) denote the angular distance between them. We define the directional offset along this curve as:
\begin{equation}
\begin{aligned}
&d(t, P_{\text{base}}, v_k) = \frac{\sin((1-t)\Phi)}{\sin \Phi} v_k + \frac{\sin(t\Phi)}{\sin \Phi} P_{\text{base}} - v_k.
\end{aligned}
\end{equation}
The corresponding hallucinated positive sample is expressed as \( v^* = v_k + d(t^*, P_{\text{base}}, v_k) \). 
where $t^*$ is the optimal step size obtained by solving:
\begin{equation}
\begin{aligned}
&t^* = \arg\min_{t \in [0, 1]} \text{ sim }(v, P^*_{v_k})\\
&\text{s.t. } \text{ sim }(v, P^*_{v_k}) \geq \text{ sim }(v, P_j), \quad P_j \in \mathcal{P}
\end{aligned}
\end{equation}

This constraint guarantees that the generated positive aligns with the semantic prototype \(P^*_{v_k}\), while having lower similarity to the anchor $v_k$. In practice, we sample \( t_c \sim \text{uniform}(0, \lambda t^*) \) and synthesize multiple hallucinated positives as: \( v_i^H = v_k + d(t_c, P_{\text{base}}, v_k) \), where the scalar \(\lambda\) controls the ``hardness" of the positives.

Since we do not want these samples to affect gradient propagation, we detach both $v_k$  and $d$ from the computational graph. These hallucinated samples are then treated as hallucinated keys in the MoCo training framework, and the corresponding loss is defined as:
\begin{equation}
\begin{aligned}
\mathcal{L}_{HE} &= -\frac{1}{G_{\text{sel}}} \sum_{i}^{G_{\text{sel}}} \frac{v_q^\top v_i^H}{\tau},
\end{aligned}
\end{equation}
where $G_{\text{sel}}$ is the number of positives that meet the constraint condition.\\
\textbf{Backdoor Feature Entanglement.} 
To mitigate the degradation of attack effectiveness caused by model aggregation and random augmentations in FSSL, we propose a backdoor feature entanglement mechanism. 
This method strengthens the consistency of backdoor representations by explicitly binding samples with identical triggers in the latent space.

The core idea is to treat samples with the same trigger as extended positive pairs in contrastive learning, encouraging the model to embed them closer in the feature space. 
This process improves both the stealth and persistence of backdoor features during global aggregation.

Concretely, given a query sample \(v_q\), we define the entanglement loss \(\mathcal{L}_{BFE}\) to maximize its similarity with positive keys \(v_m^+\) sampled from a subset \(\mathcal{B} \subset D_{p}^{y_t}\) of poisoned target-class samples:
\begin{equation}
\begin{aligned}
\mathcal{L}_{BFE} = -\frac{1}{|\mathcal{B}|}  \log \frac{\sum_{m \in \mathcal{B}} \exp(v_q^\top  v_m^{+} / \tau)}{\sum_{i=1}^M \exp(v_q^\top   v_i^Q / \tau)},
\end{aligned}
\end{equation}\\
where \(v_i^Q\) denotes negative samples drawn from the memory queue. 
This formulation encourages intra-trigger compactness in the feature space, thereby reinforcing the entanglement effect. The overall loss function is then defined as:
\begin{equation}
\begin{aligned}
\mathcal{L}_{Total} = (1-\mu) \mathcal{L}_{CL} + \mu (\mathcal{L}_{HE} + \mathcal{L}_{BFE}),
\end{aligned}
\end{equation}\\
where \(\mu\) is a hyperparameter that balances the loss terms.

\begin{table*}[t]
  \caption{Performance of HPE compared with baseline attacks on different pretraining and downstream datasets.}
  \label{tab:one}
  \setlength{\tabcolsep}{1.4mm}
  \begin{center}
      \begin{small}
          \begin{tabular}{@{}c|c|ccccccccccccc@{}}
            \toprule
            \textbf{Pre-train} & \textbf{Downstream} & \textbf{Benign} & \multicolumn{2}{c}{\textbf{BadEncoder}}  & \multicolumn{2}{c}{\textbf{BASSL}} & \multicolumn{2}{c}{\textbf{CTRL}} & \multicolumn{2}{c}{\textbf{BADFSS}} & \multicolumn{2}{c}{\textbf{EmInspector}} & \multicolumn{2}{c}{\textbf{HPE}} \\
            \cmidrule(l){3-15} 
            \textbf{Dataset} & \textbf{Dataset} & ACC & ACC & ASR & ACC & ASR & ACC & ASR & ACC & ASR & ACC & ASR & ACC & ASR \\
            \midrule
            \multirow{2}{*}{CIFAR-10} & STL-10 & \textbf{77.23} & 75.29 & 64.31 & 62.89 & 55.52 & 75.93 & 61.51 & 75.51 & 72.41 & 67.36 & 95.75 & 76.36 & \textbf{99.41}\\
                                      & GTSRB  & \textbf{82.03} & 79.52 & 62.35 & 76.56 & 47.23 & 77.53 & 62.37 & 76.53 & 61.32 & 75.64 & 91.35 & 81.18 & \textbf{98.12}\\
            \midrule
            \multirow{2}{*}{STL-10}  & CIFAR-10 & 84.81 & 82.62 & 70.31 & 81.46 & 52.31 & 79.82 & 51.94 & 80.34 & 60.79 & 81.12 & 88.34 & \textbf{84.92} & \textbf{98.96}\\
                                     & GTSRB    & \textbf{75.07} & 71.46 & 65.94 & 72.71 & 49.15 & 70.03 & 58.64 & 72.37 & 60.28 & 70.32 & 77.93 & 73.92 & \textbf{97.37}\\
            \midrule
            \multirow{2}{*}{CIFAR-100} & CIFAR-10 & 80.41 & 78.84 & 78.39 & 76.34 & 47.22 & 77.52 & 68.23 & 74.64 & 73.62 & 78.65 & 84.32 & \textbf{80.71} & \textbf{97.92}\\
                                       & GTSRB    & \textbf{74.63} & 72.14 & 65.36 & 72.51 & 42.51 & 71.62 & 64.12 & 66.34 & 61.11 & 76.37 & 89.37 & 76.24 & \textbf{95.97}\\
            \midrule
            \multirow{2}{*}{ImageNet-100} & CIFAR-100 & \textbf{79.61} & 77.55 & 60.43 & 75.37 & 52.27 & 80.15 & 46.65 & 71.23 & 47.14 & 76.91 & 85.63 & 78.75 & \textbf{93.79}\\
                                          & GTSRB     & \textbf{72.14} & 70.67 & 60.18 & 70.53 & 42.29 & 70.46 & 49.27 & 65.28 & 49.85 & 68.61 & 79.36 & 71.14 & \textbf{91.83}\\
            \bottomrule
          \end{tabular}
      \end{small}
  \end{center}
\end{table*}

\subsection{Dual-Level Constrained Poisoning}
\textbf{Motivation.}
To enhance the robustness and persistence of backdoor injection in FSSL, we propose a constrained poisoning strategy that operates on two levels: model space and gradient dimension. This dual-level constraint not only restricts the deviation of the poisoned model from the global model but also limits the injection of backdoor signals to minimally updated parameters, thus improving robustness against aggregation-aware defenses.\\
\textbf{Model-Level Constraint.}
To evade norm- or distance-based defenses, the adversary constrains the local poisoned model \(w_k\) to lie within an \(\epsilon\)-radius ball centered at the previous global model \(w^*\), ensuring \(\| w_k - w^* \| \leq \epsilon\).
This is achieved by applying projected gradient descent on the joint loss over both clean and poisoned data. 
Note that, this strategy can be combined with model replacement, where parameter scaling is applied to offset the updates from other honest clients. The adjusted model is calculated by the following:
\begin{equation}
\begin{aligned}
\hat{w}_k \simeq w_k + \sum_{j \in \mathcal{C} \setminus \{k\}} \frac{n_j}{n_{\mathcal{C} \setminus \{k\}}} (w_k - w^*),
\end{aligned}
\end{equation}\\
where \(\mathcal{C}\) denotes the set of participating clients in the current training round, and \(k\) refers to the malicious client.
\textbf{Dimension-Level Constraint.}
As the adversary cannot participate in every round, we propose leveraging historical gradient statistics to identify bottom-$k\%$ coordinates—those least updated across rounds—as the target poisoning dimensions. 
This is achieved by maintaining an exponentially weighted moving average \(\zeta\), which prioritizes coordinates with low variance and high weight:
\begin{equation}
\begin{aligned}
\zeta_t \leftarrow 
\left(1 - \frac{1}{p} \right) \cdot \zeta_{t-1} + \frac{1}{p} \cdot \text{bottom}_k(g_t),
\end{aligned}
\end{equation}
where \(g_t\) denotes the gradient computed on clean data at round \(t\), and \(p\) is the number of rounds the adversary has participated in. During attack execution, the adversary projects the gradient onto the previously identified bottom-$k\%$ coordinates, thereby ensuring only rarely updated neurons are poisoned. This design prolongs the backdoor effect, as benign clients are less likely to overwrite these dimensions.

\section{Evaluation}
To evaluate the effectiveness and robustness of our proposed method, we implemented HPE using the PyTorch framework and compared its performance against state-of-the-art backdoor attack baselines. All experiments were conducted on an NVIDIA 4090 GPU. We designed comprehensive evaluations to answer the following three research questions:\\
\textbf{RQ1 (Effectiveness):} Can HPE successfully inject backdoors into FSSL?\\
\textbf{RQ2 (Stability):} Can HPE maintain stable performance across different settings?\\
\textbf{RQ3 (Robustness):} Can HPE effectively resist existing defense methods?

\subsection{Experimental Setup}

\textbf{Datasets and Federated Setting.} 
Five datasets are employed in the experiments including CIFAR-10~\cite{cifar10/100}, STL-10~\cite{stl10}, CIFAR-100~\cite{cifar10/100}, GTSRB~\cite{GTSRB} and ImageNet-100~\cite{imagenet}. 
More details about the used datasets can be found in Appendix. 
For the IID setting, each client receives an equal number of samples from all classes. For the Non-IID setting, we follow prior work~\cite{bafssl01BADFSS} and simulate data heterogeneity using a Dirichlet distribution Dir(\(\alpha\)), where a smaller \(\alpha\) indicates higher data heterogeneity.\\
\textbf{Evaluation Metrics.}  
We use model accuracy (ACC) and attack success rate (ASR) to evaluate our HPE. ACC indicates the classification accuracy of the global SSL model for clean images, and ASR represents the fraction of samples with embedded triggers that are misclassified as labels specified by the attacker. A well-executed backdoor attack should maximize ASR while maintaining high ACC.\\
\textbf{Implementation Details.} 
We use MoCo-v2 as the default self-supervised learning algorithm and employ ResNet-18~\cite{resnet} as the default architecture network for the encoders. 
Moreover, we use a two-layer multi-layer perceptron (MLP) as a predictor. 
Following previous work~\cite{IPBA,FSSL01,Badencoder}, we use decay rate $m$ = 0.99, batch size $B$ = 128, SGD as optimizer with learning rate $l$ = 0.001 and run experiments with $K$ = 5 clients (one is malicious and the poison ratio is 1\%) for $E$ = 100 training rounds, where each client performs $e$ = 3 local epochs in each round. \\
\textbf{Baseline.}
We compare HPE with the state-of-the-art backdoor attack method, BADFSS~\cite{bafssl01BADFSS} and EmInspector~\cite{bafssl02EmInspector}. Additionally, we transfer three kinds of centralized scenario backdoor attacks against SSL to FL, i.e., BASSL~\cite{BASSL}, CTRL~\cite{CTRL} and BadEncoder~\cite{Badencoder}, where the first two are data-level attacks and the last one is a model-level attack. 

\subsection{Effectiveness Evaluation (RQ1)}

\textbf{Effectiveness comparison with SOTA attack methods.} 
\Cref{tab:one} presents the performance of HPE against five baseline backdoor methods under a standard SSL setting, where the pre-training and downstream datasets differed. 
The ``Benign" column denotes the clean baseline without backdoor injection, and the best results are highlighted in bold.
The experimental results indicate that HPE achieves over 91\% ASR across all benchmark datasets, consistently outperforming baseline methods.
Notably, it attains 99.41\% ASR when pre-trained on CIFAR-10 and evaluated on STL-10, surpassing the lowest-performing BASSL by 43.89\%.
This significant margin highlights HPE’s effectiveness in cross-dataset settings. 
In contrast, BADFSS demonstrates limited transferability, indicating its vulnerability to representation shifts between training and downstream domains.\\
\textbf{Effectiveness on different encoder architectures.}
The impact of encoder architecture on HPE is minimal. As shown in \Cref{fig:four-a}, although the ACC of HPE varies across ResNet-18~\cite{resnet}, ResNet-50~\cite{resnet}, and ViT~\cite{Vit} architectures, the ASR consistently exceeds 96\%. 
This result indicates that HPE can successfully inject backdoors into various encoder architectures , maintaining stable attack performance across different models.\\
\textbf{Effectiveness on different SSL algorithms.} 
A strong attack method should be compatible with different SSL algorithms. Therefore, we explored the performance of HPE on FSSL under four SSL methods: MoCo-v2~\cite{SSL01Moco}, SimCLR~\cite{SSL02Simclr}, SwAV~\cite{swav}, and BYOL~\cite{SSL03BYOL}. 
As shown in \Cref{fig:four-b}, HPE exhibits stable attack performance across different SSL algorithms, highlighting its adaptability and generalizability.\\
\textbf{Effectiveness under different poisoning degrees.}
We also explore the impact of ratios of poison samples and malicious clients to HPE.
As illustrated in \Cref{fig:five}, HPE performs better with higher poison samples and malicious clients.
Notably, HPE still achieves strong attack performance even with only 0.2\% poisoned data, highlighting its efficient utilization of limited backdoor samples.

\subsection{Stability Evaluation (RQ2)}

\textbf{Stability under different settings of data distribution.} 
Different Non-IID data distributions are critical and realistic considerations in FL scenarios. 
In our setup, we simulate data heterogeneity using Dirichlet distributions with \(\alpha\) = 0.1 and \(\alpha\) = 10. As shown in \Cref{tab:two}, HPE maintains stable performance across these heterogeneous settings.
Additional evaluations under varying client numbers are provided in Appendix \Cref{tab:five}.\\
\textbf{Stability after the removal of malicious clients.}
As shown in \Cref{tab:three}, under the same attack duration, HPE demonstrates superior backdoor efficiency and persistence. After the removal of malicious clients, the ASR of EmInspector drops significantly compared to HPE. Following 100 rounds of cleansing training, EmInspector retains only 61.33\% of its peak ASR, whereas HPE maintains as high as 96.70\%.
Further, stability results under varying trigger sizes and batch sizes are supplemented in the appendix.

\begin{figure}[t]
  \centering
  \begin{subfigure}{0.49\linewidth} 
    \includegraphics[width=\linewidth]{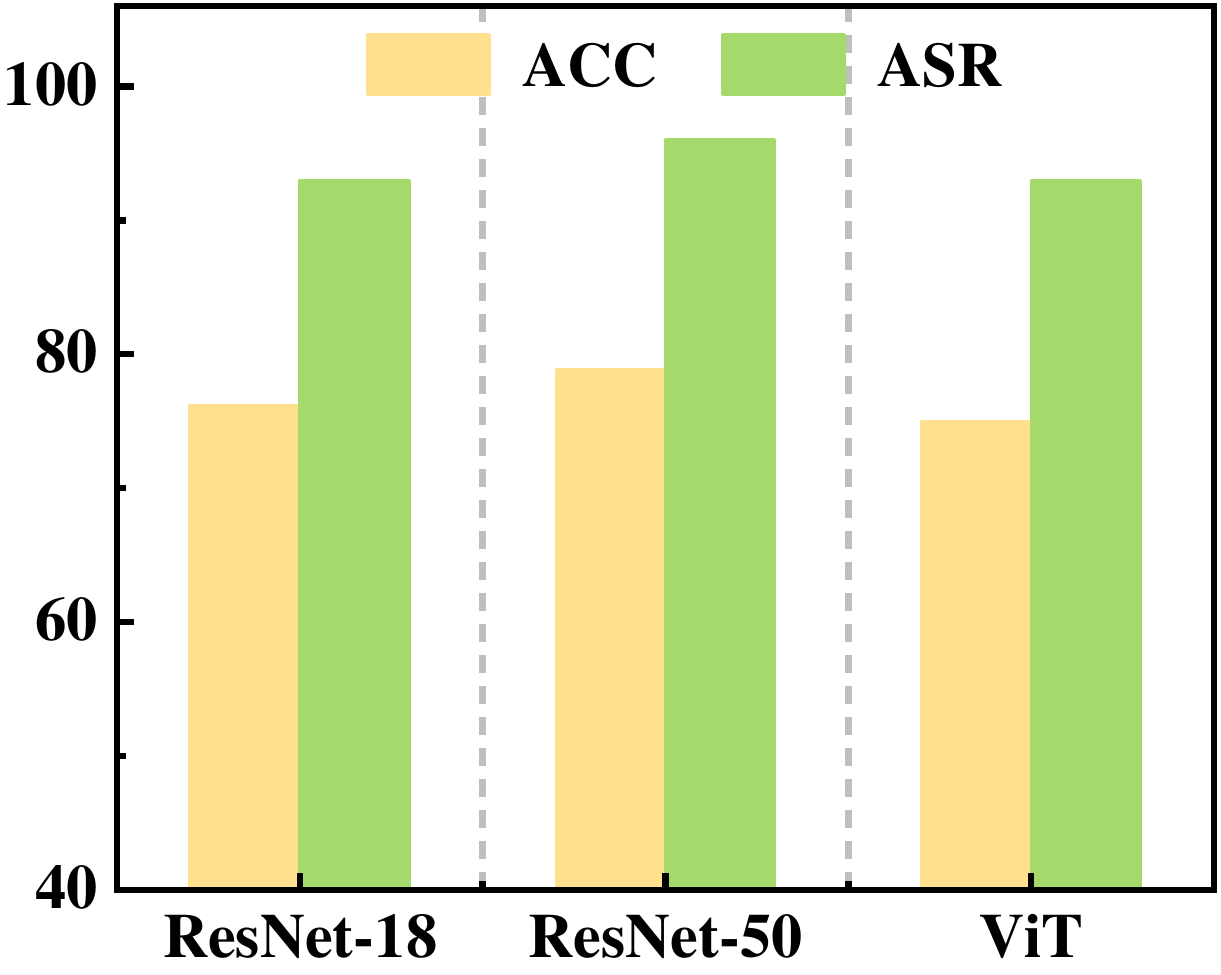}
    \caption{Impact of encoder architecture.}
    \label{fig:four-a}
  \end{subfigure}
  \hfill 
  \begin{subfigure}{0.49\linewidth} 
    \includegraphics[width=\linewidth]{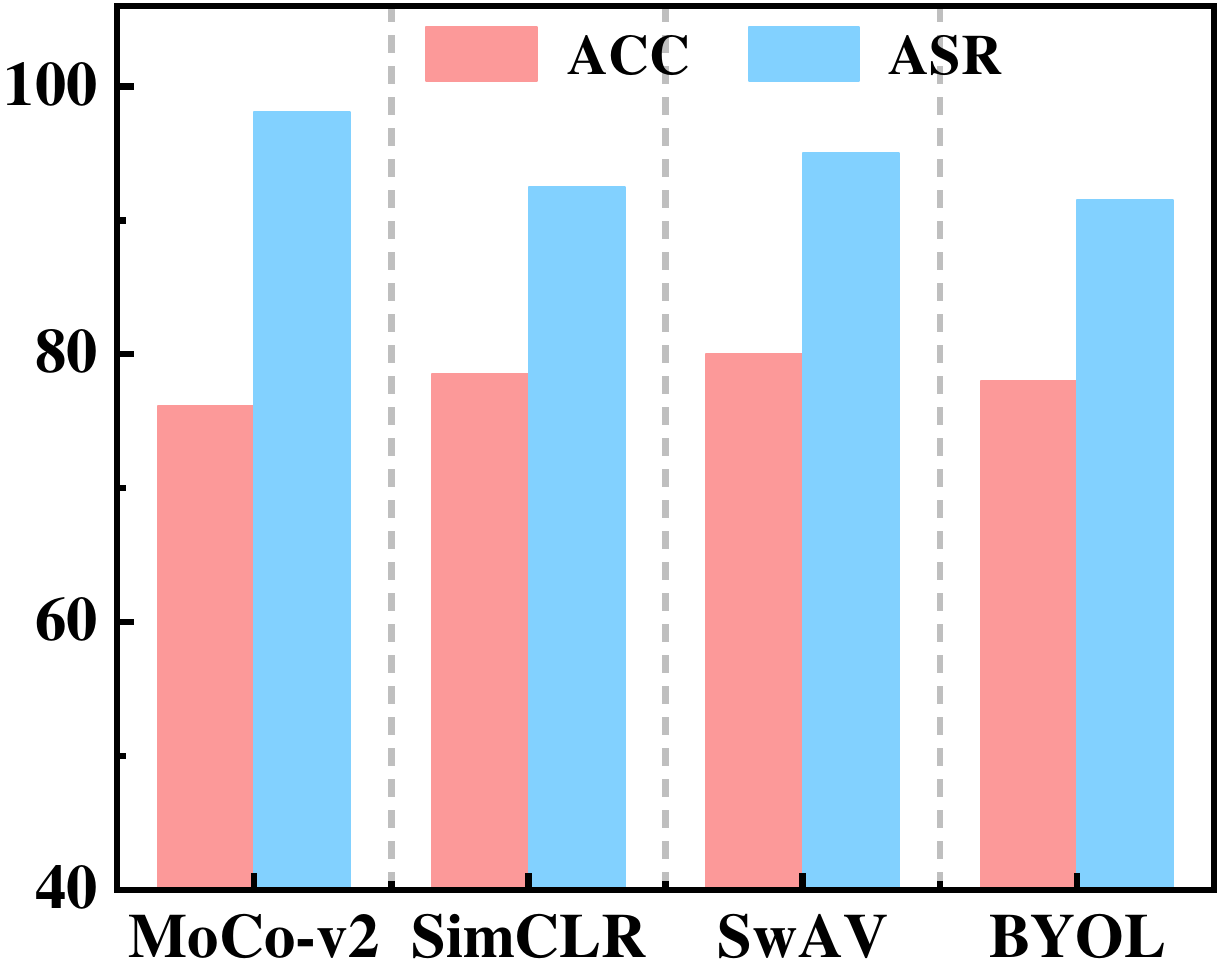}
    \caption{Impact of self-supervised learning algorithm.}
    \label{fig:four-b}
  \end{subfigure}
  \caption{Performance evaluation of the proposed HPE with different encoder architecture and SSL algorithm.}
  \label{fig:four}
\end{figure}

\begin{figure}[t]
  \centering
  \begin{subfigure}{0.49\linewidth} 
    \includegraphics[width=\linewidth]{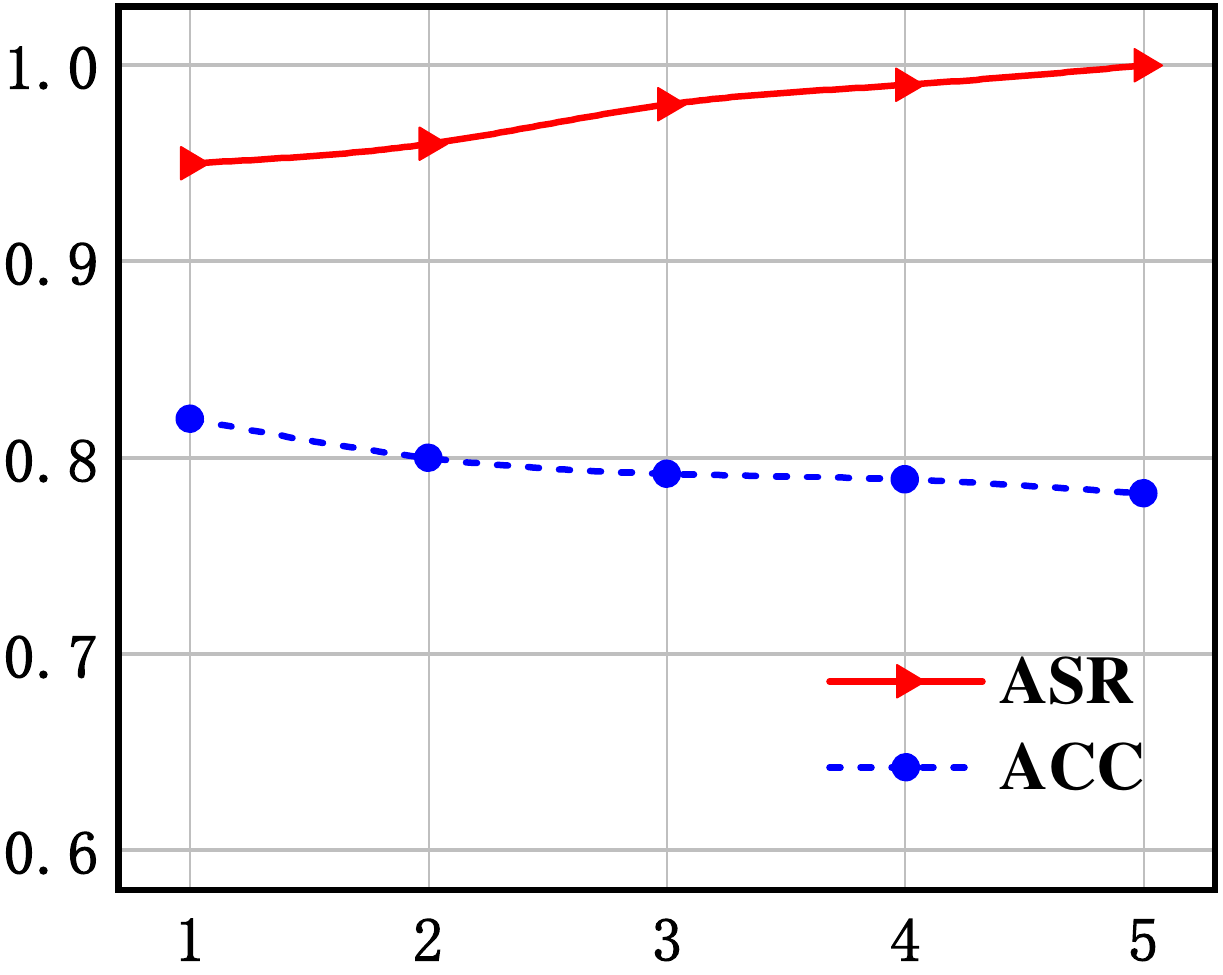}
    \caption{Ratios of malicious clients.}
    \label{fig:five-a}
  \end{subfigure}
  \hfill 
  \begin{subfigure}{0.49\linewidth} 
    \includegraphics[width=\linewidth]{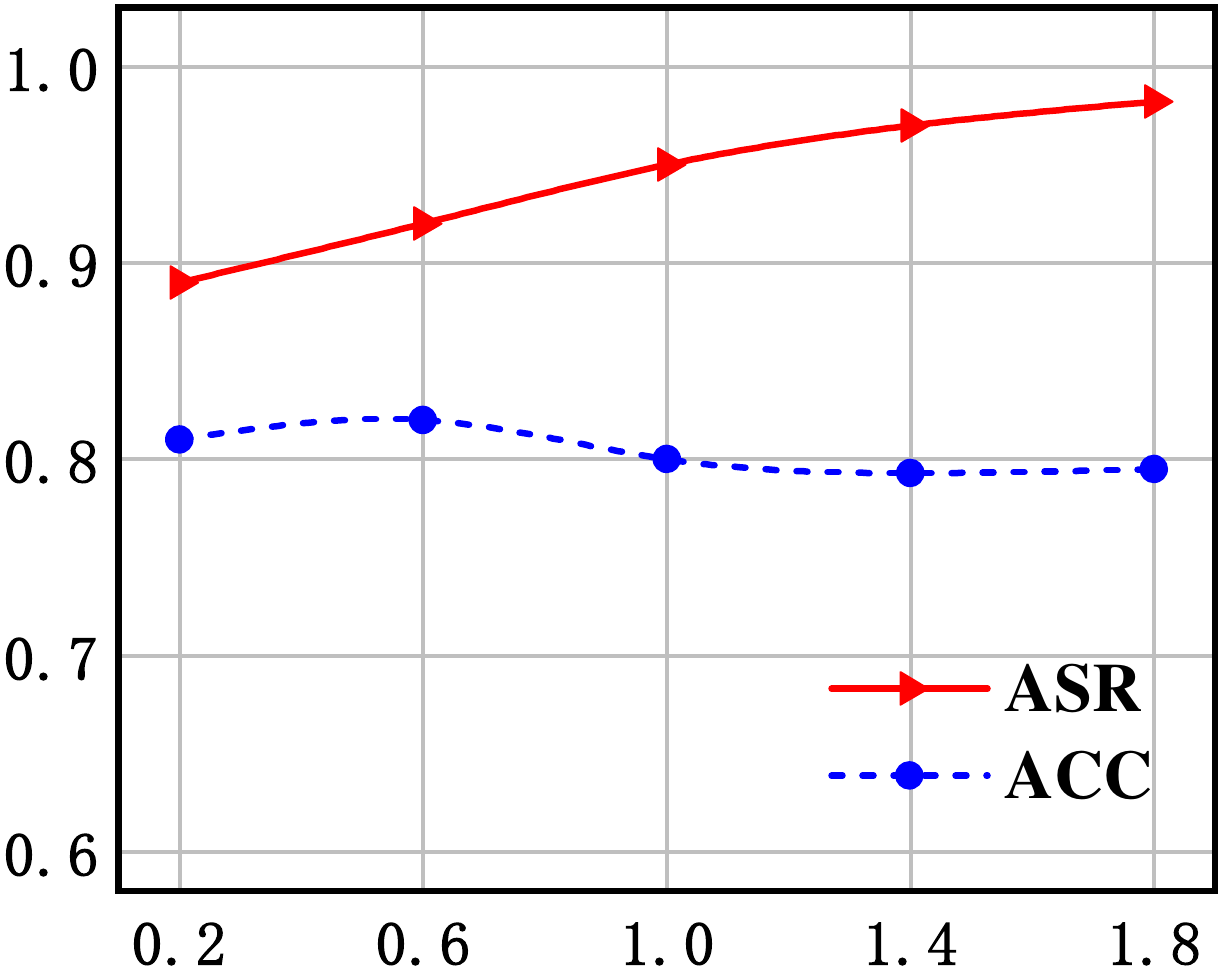}
    \caption{Ratios of poison samples.}
    \label{fig:five-b}
  \end{subfigure}
  \caption{Performance of HPE with different ratios of malicious clients and poison samples.}
  \label{fig:five}
\end{figure}
\begin{table}[t]
    \caption{Performance of HPE under different data distribution settings.}
    \label{tab:two}
    \begin{center}
        \begin{small}
            \begin{tabular}{c| c| c c c c}
                \toprule
                \multirow{2}{*}{Dataset} & \multirow{2}{*}{Setting} & \multicolumn{2}{c}{Clean} & \multicolumn{2}{c}{Backdoored} \\
                \cmidrule(lr){3-6}
                 &  & ACC & ASR & ACC & ASR \\
                \midrule
                \multirow{2}{*}{CIFAR-10} & \(\alpha\) = 10 & 77.23 & 10.70 & 77.36 & 99.41 \\
                                         & \(\alpha\) = 0.1     & 68.95 & 4.79 & 67.46 & 94.77 \\
                \midrule
                \multirow{2}{*}{STL-10} & \(\alpha\) = 10 & 84.81 & 11.91 & 84.92 & 98.96 \\
                                       & \(\alpha\) = 0.1      & 77.07 & 6.36 & 76.81 & 93.35 \\
                \midrule
                \multirow{2}{*}{CIFAR-100} & \(\alpha\) = 10 & 80.41 & 15.72 & 80.71 & 97.92 \\
                                          & \(\alpha\) = 0.1     & 73.73 & 6.73 & 73.24 & 90.91 \\
                \midrule                          
                \multirow{2}{*}{ImageNet-100} & \(\alpha\) = 10 & 79.61 & 11.72 & 78.75 & 93.79 \\
                                              & \(\alpha\) = 0.1     & 72.14 & 5.28 & 73.16 & 88.32 \\                          
                \bottomrule
            \end{tabular}
        \end{small}
    \end{center}
\end{table}

\begin{figure}[t]
  \centering
  \begin{subfigure}{0.49\linewidth} 
    \includegraphics[width=\linewidth]{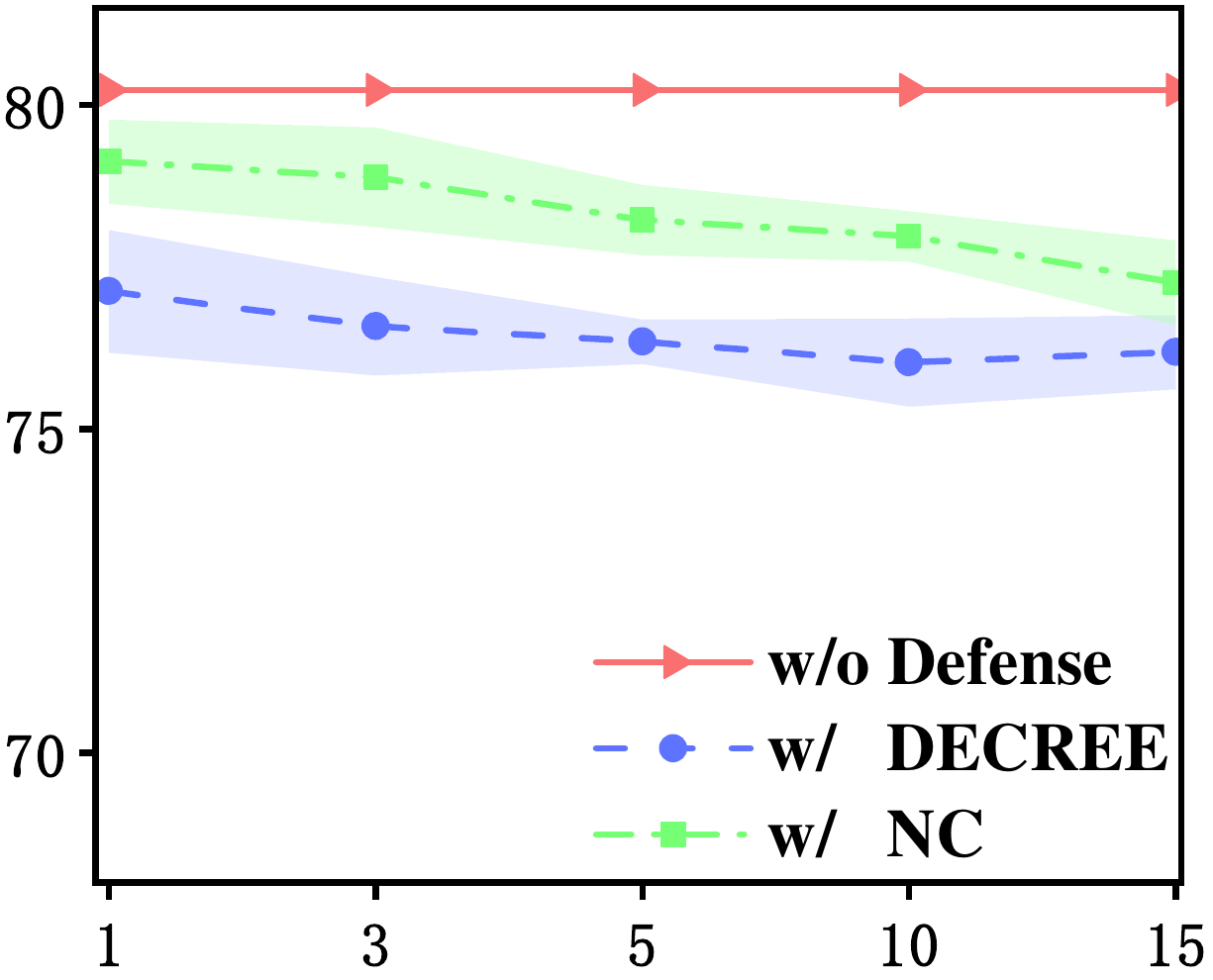}
    \caption{ACC w/ or w/o Defense.}
    \label{fig:six-a}
  \end{subfigure}
  \hfill 
  \begin{subfigure}{0.5\linewidth} 
    \includegraphics[width=\linewidth]{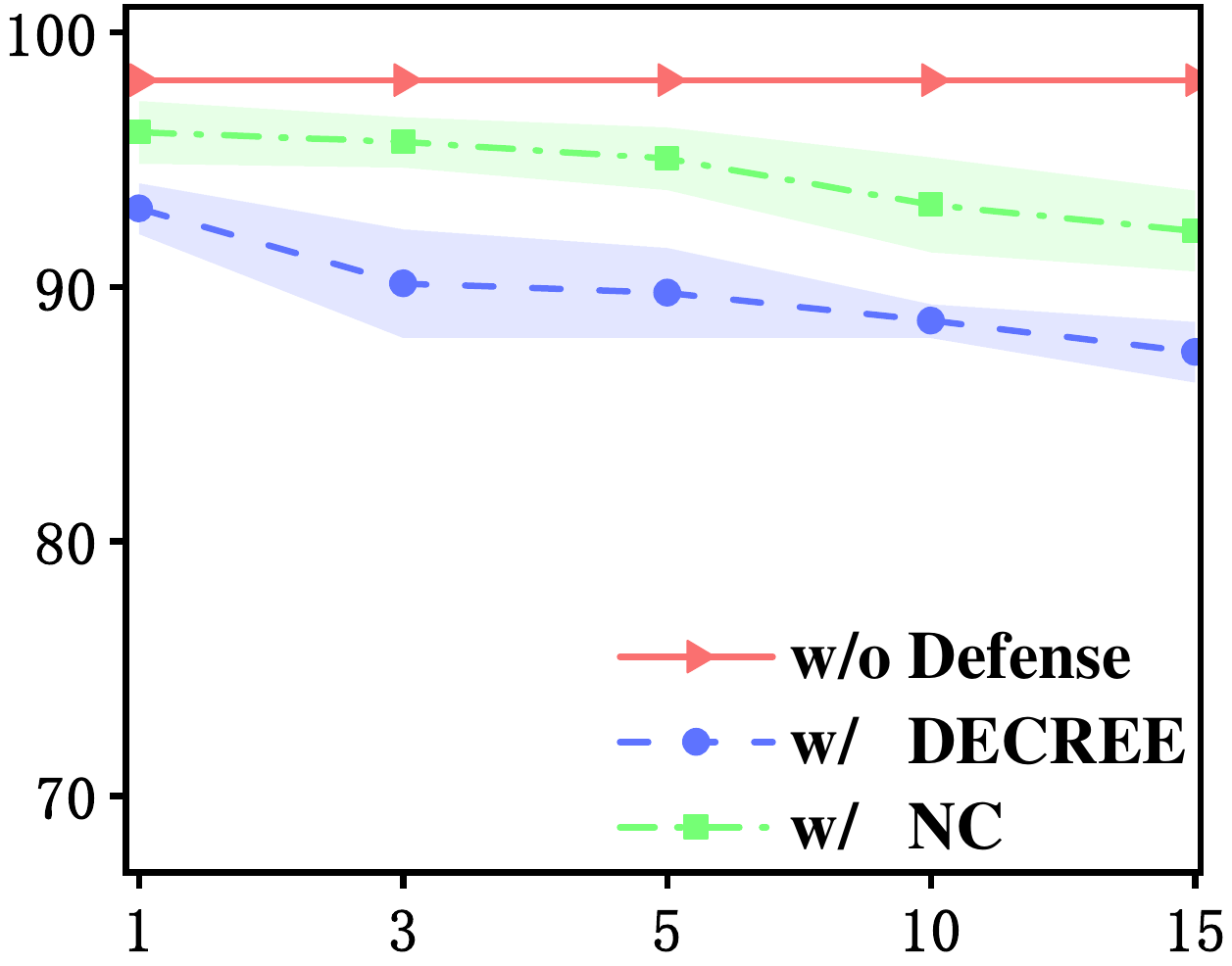}
    \caption{ASR w/ or w/o Defense.}
    \label{fig:six-b}
  \end{subfigure}
  \caption{Defense evaluation results.}
  \label{fig:six}
\end{figure}

\begin{table}[b]
    \caption{The decay rate of EmInspector and HPE after removing the backdoor attack. The symbol (\%) indicates the current backdoor effect as a percentage of its peak value at the stopping round.}
    \label{tab:three}
    \begin{center}
        \begin{small}
            \begin{tabular}{c| c| c c}
                \toprule
                \multicolumn{2}{c|}{Evaluation Round} & {HPE} & {EmInspector} \\
                \midrule
                \multirow{2}{*}{Stopping round} & ASR & 99.41 & 95.75 \\
                                         & (\%)     & 100.00 & 100.00 \\
                \midrule
                \multirow{2}{*}{50 rounds later} & ASR & 98.32 & 82.92 \\
                                       & (\%)      & 98.90 & 93.91 \\
                \midrule
                \multirow{2}{*}{100 rounds later} & ASR & 96.13 & 58.72 \\
                                          & (\%)    & 96.70 & 61.33 \\                        
                \bottomrule
            \end{tabular}
        \end{small}
    \end{center}
\end{table}

\begin{table}[t]
    \caption{Impact of \(\lambda\) on HPE.}
    \label{tab:four}
    \begin{center}
        \begin{small}
            \begin{tabular}{c| c c c c c}
                \toprule
                {\(\lambda\)} & {1} & {0.8} & {0.6} & {0.4} & {0.2} \\
                \midrule
                {CIFAR-10} & 97.83 & 98.61 & 98.70 & 97.58 & 97.21\\
                \midrule
                {STL-10} & 97.23 & 97.96 & 97.12 & 96.88 & 96.51\\
                \bottomrule
            \end{tabular}
        \end{small}
    \end{center}
\end{table}

\begin{figure}[tb]
  \centering
  \begin{subfigure}{0.49\linewidth} 
    \includegraphics[width=\linewidth]{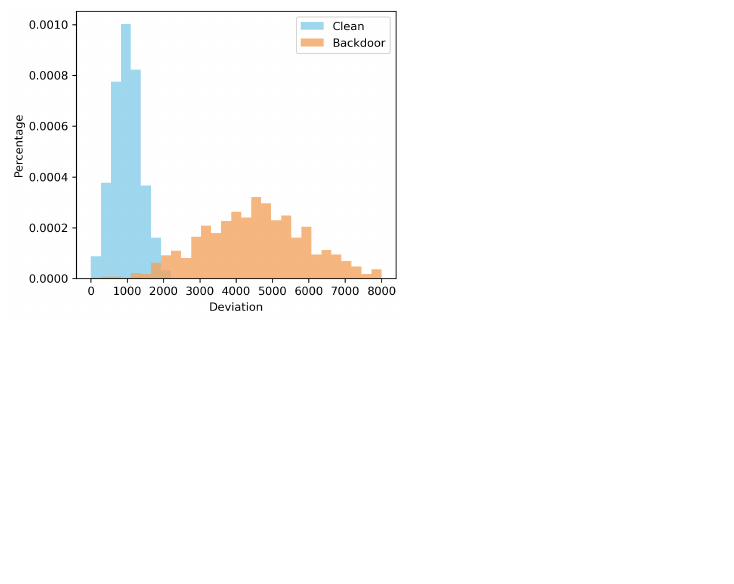}
    \caption{BadEncoder on STL-10.}
    \label{fig:7-a}
  \end{subfigure}
  \hfill 
  \begin{subfigure}{0.5\linewidth} 
    \includegraphics[width=\linewidth]{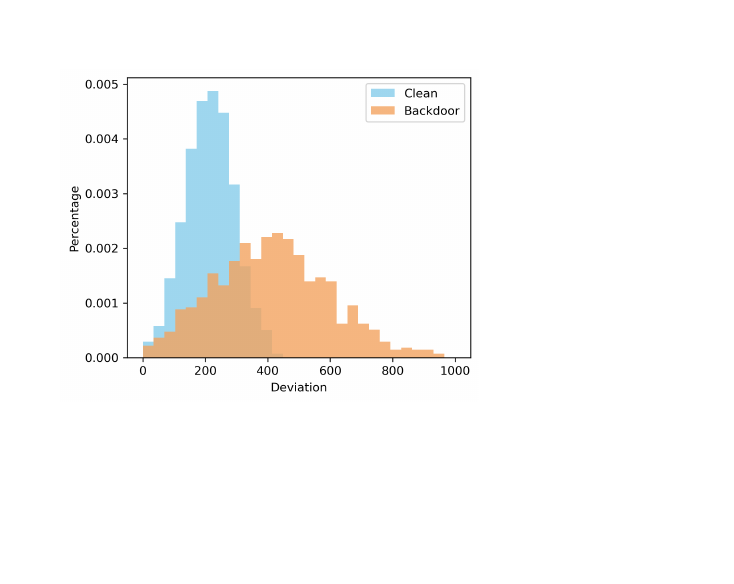}
    \caption{HPE on STL-10.}
    \label{fig:7-b}
  \end{subfigure}
  \caption{Deviation distribution of Beatrix for clean and poisoned samples.}
  \label{fig:7}
\end{figure}

\begin{figure}[th!]
  \centering
  \begin{subfigure}{0.49\linewidth} 
    \includegraphics[width=\linewidth]{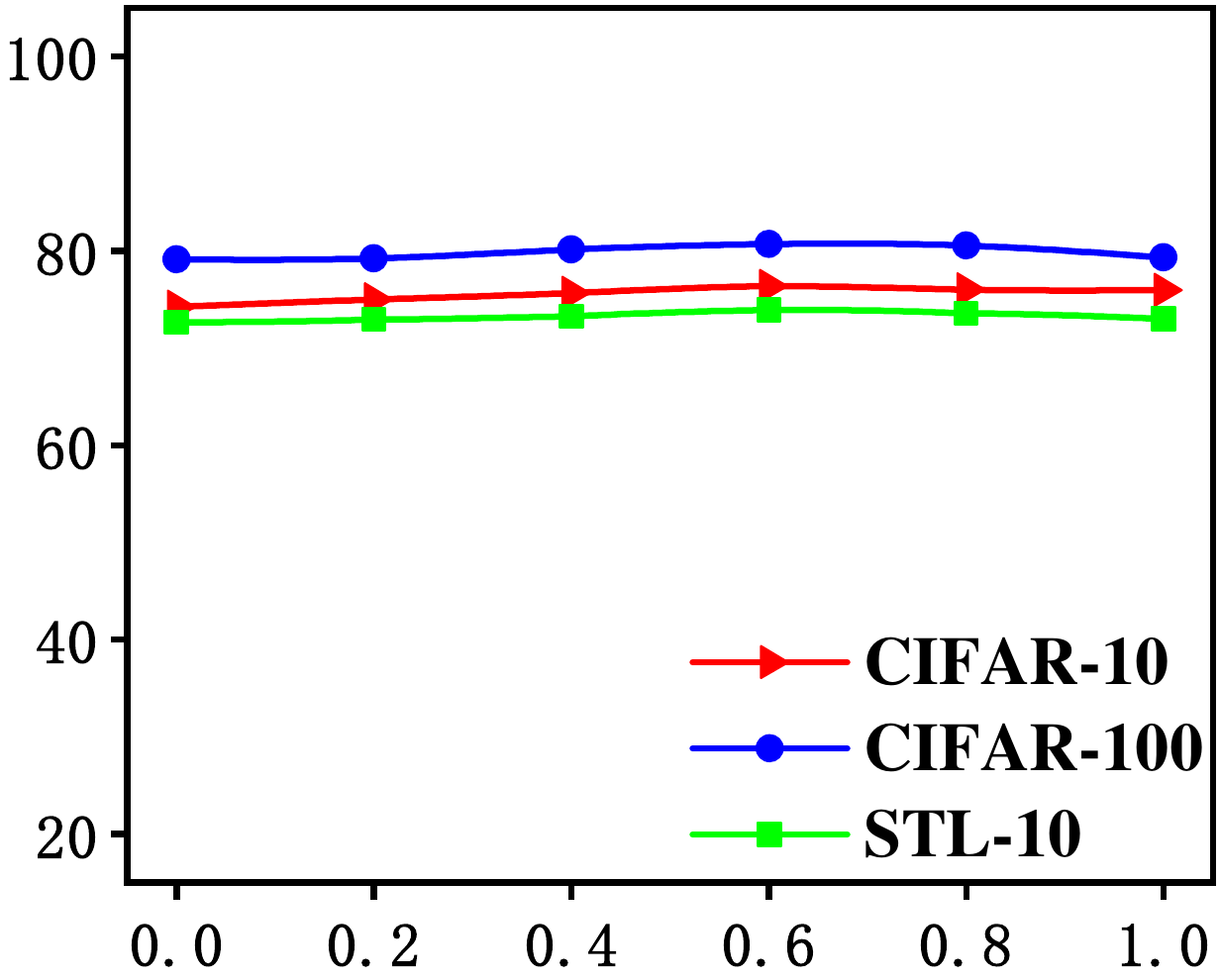}
    \caption{ACC with different \(\mu\).}
    \label{fig:seven-a}
  \end{subfigure}
  \hfill 
  \begin{subfigure}{0.49\linewidth} 
    \includegraphics[width=\linewidth]{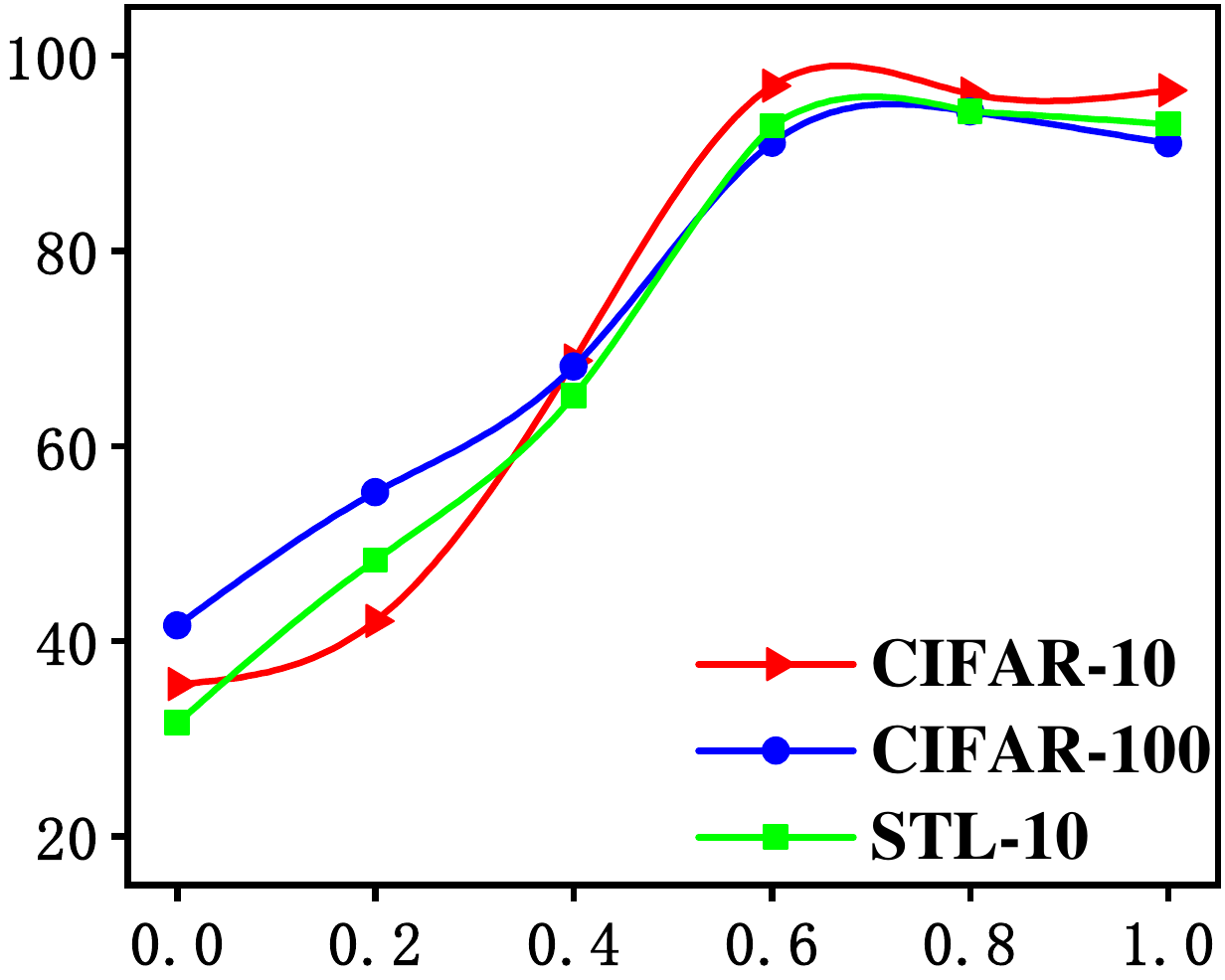}
    \caption{ASR with different \(\mu\).}
    \label{fig:seven-b}
  \end{subfigure}
  \caption{Impact of \(\mu\) on HPE.}
  \label{fig:8}
\end{figure}
\subsection{Robustness Evaluation (RQ3)}

To evaluate the robustness of HPE under existing backdoor defense mechanisms, we adopt several representative defense strategies, including DECREE, Neural Cleanse, Beatrix and FL-specific defenses. Additional results are included in the appendix.\\
\textbf{Resistance to DECREE.} 
DECREE \cite{DECREE} is a detection method that identifies backdoored encoders by reverse-engineering potential triggers.
We implement DECREE on the server side and follow its standard protocol, sampling 1\%–15\% of the training data for trigger inversion.
Detected backdoored encoders are excluded from aggregation.
As shown in \Cref{fig:six}, HPE effectively evades detection by DECREE.\\
\textbf{Resistance to Neural Cleanse.} 
Neural Cleanse (NC)~\cite{Neural_cleanse} is an anomaly-based backdoor defense that detects backdoors by computing the anomaly index of reconstructed triggers. 
Originally designed for supervised learning, we adapt it to downstream classifiers by reconstructing triggers from 1\%–15\% of training data.
Experimental results (as shown in \Cref{fig:six}) indicate that NC is ineffective in detecting HPE.\\
\textbf{Resistance to Beatrix.} 
Beatrix~\cite{Beatrix} detects poisoned samples by measuring their deviation in the embedding space. 
If the deviation exceeds a threshold, the sample is flagged as poisoned. 
As shown in \Cref{fig:7}, BadEncoder yields high deviations that clearly separate poisoned from clean samples, enabling accurate detection. 
In contrast, the poisoned samples generated by HPE partially overlap with clean ones, making detection more difficult.\\
\textbf{Resistance to FL Defense Methods.} 
We evaluate HPE under six aggregation-based defenses: Krum~\cite{Krum}, FLTrust~\cite{Fltrust}, FoolsGold~\cite{Foolsgold}, FLAME~\cite{FLAME}, FLARE~\cite{flare} and EmInspector~\cite{bafssl02EmInspector}. 
As shown in Appendix \Cref{tab:six}, HPE consistently achieves high ASR on both CIFAR-10 and STL-10, even in the presence of these defenses. 
Furthermore, integrating HPE with model replacement (MR) further boosts its effectiveness across diverse defensive settings.

\subsection{Ablation Studies}

\textbf{Ablation on \(\lambda\).}
The value of \(\lambda\) effectively controls the maximum hardness of the generated positives. In \Cref{tab:four}, we modify \(\lambda\), and observe that a value of 1.0 and 0.8 works well. We found that the \(\lambda\) = 0.8 works consistently well across all datasets and use it for all our experiments.\\
\textbf{Ablation on \(\mu\).}
As defined in Eq. (9), the hyperparameter \(\mu\) controls the backdoor injection behavior of HPE, balancing attack strength (ASR) and representation quality (ACC).
As shown in \Cref{fig:8}, experiments on three datasets show that increasing \(\mu\) raises ASR while keeping ACC stable, validating its role in backdoor injection control.

\section{Conclusion}

This paper introduces HPE, a novel and effective backdoor attack method specifically designed for FSSL.
HPE integrates hallucination-based augmentation and feature entanglement to enhance backdoor embedding, while leveraging selective poisoning and proximity-aware updates to improve stealth and persistence.
In contrast to existing methods, HPE effectively addresses the challenges of high poisoning rates and backdoor forgetting.
Extensive experiments demonstrate that HPE consistently outperforms baseline methods and remains effective across diverse FSSL settings. Furthermore, our evaluation of potential countermeasures reveals that existing defense mechanisms fail to effectively mitigate HPE, highlighting the need for dedicated defense strategies tailored to backdoor threats in FSSL.

\section*{Acknowledgements}
This work was supported by grants 24KJB520042 (Jiangsu), 2025YSZ-017 (Yangzhou), 2023SGJ014 (Hefei), COGOS2023HE01 (iFLYTEK), Y202352288 (Zhejiang), and 2023AY11057 (Jiaxing), as well as by resources from Microsoft Azure and the NSF-supported Chameleon testbed. Additionally, it was supported by the Anhui Provincial University Outstanding Research and Innovation Team Program (No.2024AH010022).

\section*{Impact Statement}
This paper presents work whose goal is to advance the field of Machine Learning. There are many potential societal consequences of our work, none which we feel must be specifically highlighted here.

\nocite{BA02}
\nocite{IPBA}
\nocite{ssldefender}
\nocite{dede}

\bibliography{example_paper}
\bibliographystyle{icml2026}

\newpage
\appendix
\onecolumn

\section{Convergence Preservation under HPE}
This section aims to theoretically demonstrate that under the proposed HPE attack, the benign task performance of the federated learning model still converges. This proof is crucial for establishing the \textit{stealthiness} of the HPE attack. It shows that an adversary can successfully implant a backdoor without significantly degrading the global model's performance on clean data, thereby evading detection mechanisms that monitor for accuracy degradation or model divergence.

Our proof framework adopts a standard approach from online convex optimization. We will prove that even when a malicious client uses the custom HPE loss function for local training, the average regret of the global model on the primary benign task trends to zero as the number of training rounds increases, thus ensuring convergence.

To construct our proof, we adopt the following three standard assumptions, which are common in the analysis of federated and distributed optimization algorithms:

\begin{itemize}
    \item \textbf{Convexity of the Loss Function:} We assume the primary benign task's loss function, namely the contrastive loss $\mathcal{L}_{CL}$, is convex with respect to the model parameters $\theta$. For any $\theta_1, \theta_2$ in the parameter space, we have:
    \begin{equation}
        \mathcal{L}_{CL}(\theta_2) \ge \mathcal{L}_{CL}(\theta_1) + \langle \nabla \mathcal{L}_{CL}(\theta_1), \theta_2 - \theta_1 \rangle.
    \end{equation}

    \item \textbf{Bounded Parameter Space:} We assume that the Euclidean distance between any two model parameter vectors $\theta_1, \theta_2$ is bounded by a constant $D$:
    \begin{equation}
        ||\theta_1 - \theta_2||_2 \le D.
    \end{equation}

    \item \textbf{Bounded Gradients:} We assume that the gradients of the benign loss function $\mathcal{L}_{CL}$ and the gradients of the attack-specific loss terms ($\mathcal{L}_{HE}$ and $\mathcal{L}_{BFE}$) are bounded. Let $g_t^{CL} = \nabla \mathcal{L}_{CL}(\theta^t)$ and $g_t^{HPE} = \nabla(\mathcal{L}_{HE}(\theta^t) + \mathcal{L}_{BFE}(\theta^t))$. We assume:
    \begin{equation}
        ||g_t^{CL}||_2 \le G_{benign} \quad \text{and} \quad ||g_t^{HPE}||_2 \le G_{attack},
    \end{equation}
    where $G_{benign}$ and $G_{attack}$ are positive constants.
\end{itemize}

\begin{theorem}[Convergence under HPE Attack]
Let $\mathcal{L}_{CL}$ be a convex loss function satisfying the assumptions above. In the HPE attack, a malicious client uses the total loss function $\mathcal{L}_{Total} = (1-\mu)\mathcal{L}_{CL} + \mu(\mathcal{L}_{HE} + \mathcal{L}_{BFE})$. If a decaying learning rate $\alpha_t = \frac{C}{t^p}$ (where $C>0$ and $0 < p < 1$) is used, then the average regret on the benign task converges to zero:
$$
\lim_{T\rightarrow\infty} \frac{1}{T} \sum_{t=1}^{T} (\mathcal{L}_{CL}(\theta^t) - \mathcal{L}_{CL}(\theta^*)) = 0,
$$
where $\theta^*$ is the optimal model parameter that minimizes the cumulative benign loss $\sum_{t=1}^{T} \mathcal{L}_{CL}(\theta)$.
\end{theorem}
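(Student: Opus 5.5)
We model the aggregated global update as projected (online) gradient descent on the benign loss with a bounded perturbation. Write the server update as
\[
\theta^{t+1} = \Pi\bigl(\theta^t - \alpha_t \tilde g_t\bigr),
\qquad
\tilde g_t = g_t^{CL} + e_t ,
\]
where $\tilde g_t$ is the effective aggregated gradient. The error term $e_t$ collects the malicious contribution. Since the malicious client's gradient is $(1-\mu)g_t^{CL} + \mu g_t^{HPE}$, its deviation from $g_t^{CL}$ is $\mu\,(g_t^{HPE} - g_t^{CL})$. After FedAvg weighting by the malicious client's share $\beta \le 1$, this gives
\[
\|e_t\|_2 \le \beta\mu\,(G_{attack} + G_{benign}) =: E .
\]
Consequently $\|\tilde g_t\|_2 \le G_{benign} + E =: G$. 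The projection $\Pi$ onto the bounded parameter set is nonexpansive, consistent with the model-level $\epsilon$-ball constraint.

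We then run the standard Zinkevich argument. Nonexpansiveness of $\Pi$ gives
\[
\|\theta^{t+1}-\theta^*\|^2 \le \|\theta^t-\theta^*\|^2 - 2\alpha_t\langle \tilde g_t, \theta^t-\theta^*\rangle + \alpha_t^2 G^2 .
\]
Rearranging yields
\[
\langle g_t^{CL}, \theta^t-\theta^*\rangle \le \frac{\|\theta^t-\theta^*\|^2-\|\theta^{t+1}-\theta^*\|^2}{2\alpha_t} + \frac{\alpha_t G^2}{2} - \langle e_t,\theta^t-\theta^*\rangle .
\]
Convexity bounds $\mathcal{L}_{CL}(\theta^t)-\mathcal{L}_{CL}(\theta^*)$ by the left-hand side. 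We sum over $t$ and telescope with nonincreasing $\alpha_t$, using the diameter bound $D$. This gives
\[
\sum_{t=1}^T \bigl(\mathcal{L}_{CL}(\theta^t)-\mathcal{L}_{CL}(\theta^*)\bigr) \le \frac{D^2}{2\alpha_T} + \frac{G^2}{2}\sum_{t=1}^T \alpha_t + \sum_{t=1}^T \langle e_t,\theta^*-\theta^t\rangle .
\]
With $\alpha_t = C/t^p$, the first term is $D^2T^p/(2C)$. The second is at most $G^2 C\, T^{1-p}/(2(1-p))$. Both are $o(T)$ because $0<p<1$, so their contribution to the average regret vanishes.

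The main obstacle is the bias term $\sum_t\langle e_t,\theta^*-\theta^t\rangle$. Cauchy--Schwarz bounds it only by $TDE$, which is not $o(T)$. A persistent, bounded adversarial drift would therefore leave an $O(DE)$ floor in the average regret rather than zero. To close this gap I would exploit the structure of the attack. I would try, in order, the following three routes.

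\begin{enumerate}
\item \textbf{Intermittent participation.} The attacker participates in only $p_{att}$ rounds. If $p_{att}=o(T)$, then $e_t=0$ in all other rounds and the sum is $o(T)$.
\item \textbf{Dimension-level constraint.} The bottom-$k\%$ projection confines $e_t$ to coordinates where $g^{CL}$ is nearly zero. This makes the inner product with the benign progress direction vanish asymptotically.
\item \textbf{Decaying weight.} Treat $\mu$ (equivalently, the effective $\beta\mu$) as decaying, or the norm ball radius $\epsilon_t\to0$. Then $\sum_t \|e_t\| = o(T)$.
\end{enumerate}

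I expect the cleanest rigorous version to state an explicit assumption $\sum_{t\le T}\|e_t\|_2 = o(T)$, justified by the constrained-poisoning design. With that assumption, dividing by $T$ and letting $T\to\infty$ gives the claimed limit of zero.
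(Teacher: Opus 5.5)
Your derivation up to the bias term is correct, and it is the same argument the paper uses. The paper works with the malicious client's own unprojected iterate (your $\beta=1$ without $\Pi$) and divides by $1-\mu$ to isolate $\langle g_t^{CL},\theta^t-\theta^*\rangle$. It gets the same $O(T^p)+O(T^{1-p})$ terms plus a linear term $\mu D G_{attack}T/(1-\mu)$, which is your $TDE$.

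The gap is the one you flag yourself. Your conclusion rests on $\sum_{t\le T}\|e_t\|_2=o(T)$, which is not a hypothesis, and none of your three routes is available in the stated setting:
\begin{itemize}
\item Route 1 needs the attacker's participation frequency to vanish. The statement instead applies $\theta^{t+1}=\theta^t-\alpha_t\nabla\mathcal{L}_{Total}(\theta^t)$ at every round, and participation in any constant fraction of rounds can still leave $\sum_t\|e_t\|_2$ of order $T$.
\item Route 3 contradicts the fixed $\mu$. A fixed $\epsilon$-ball bounds the step length, not the bias.
\item Route 2 controls the wrong inner product. The bias pairs $e_t$ with $\theta^*-\theta^t$, not with $g_t^{CL}$, and once the attacker drifts the poisoned coordinates the clean gradient there need not stay small. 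It works only if $\mathcal{L}_{CL}$ is exactly invariant along those coordinates, which is an extra structural assumption.
\end{itemize}

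This gap cannot be closed, because the statement is false for every fixed $\mu\in(0,1)$. Take $\theta\in\mathbb{R}$, $\mathcal{L}_{CL}(\theta)=\theta^2/2$, $\mathcal{L}_{HE}+\mathcal{L}_{BFE}=(\theta-1)^2/2$, $\theta^1=0$ and $C<1$. Then $\nabla\mathcal{L}_{Total}(\theta)=\theta-\mu$, so $\theta^{t+1}-\mu=(1-\alpha_t)(\theta^t-\mu)$, and $\sum_t\alpha_t=\infty$ gives $\theta^t\to\mu$. All iterates stay in $[0,\mu]\subset[0,1]$, so every assumption holds with $D=1$ and $G_{benign}=G_{attack}=1$. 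Yet $\theta^*=0$, and the average benign regret tends to $\mu^2/2\neq 0$. In your FedAvg model the aggregated gradient is $\theta-\beta\mu$, and the limit is $(\beta\mu)^2/2$.

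So your $O(DE)$ floor is real. What your argument actually proves is $\limsup_T R(T)/T\le DE$, or the stated limit under your added assumption, and that is the correct replacement for the theorem.

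The paper reaches the same $O(T)$ term and concedes it ``does not directly show convergence to zero.'' It then invokes the $O(T^{\max(p,1-p)})$ regret of $\mathcal{L}_{Total}$. But that regret is measured against $\arg\min\mathcal{L}_{Total}$, and unpacked it yields only
$R(T)/T\le\frac{\mu}{1-\mu}\bigl[(\mathcal{L}_{HE}+\mathcal{L}_{BFE})(\theta^*)-\inf(\mathcal{L}_{HE}+\mathcal{L}_{BFE})\bigr]+o(1)$.
This is again a $\mu$-dependent floor, so the paper's argument does not establish the stated limit either; your diagnosis is the accurate one.
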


\begin{proof}
The total regret on the benign task over $T$ rounds is defined as $R(T) = \sum_{t=1}^{T} (\mathcal{L}_{CL}(\theta^t) - \mathcal{L}_{CL}(\theta^*))$. Our goal is to show that $\lim_{T\rightarrow\infty} \frac{R(T)}{T} = 0$.

From the convexity assumption (1), we have:
\begin{equation}
    R(T) \le \sum_{t=1}^{T} \langle g_t^{CL}, \theta^t - \theta^* \rangle.
    \label{eq:regret_bound}
\end{equation}

The model update rule on the malicious client at round $t$ is $\theta^{t+1} = \theta^t - \alpha_t \nabla \mathcal{L}_{Total}(\theta^t)$, where $\nabla \mathcal{L}_{Total}(\theta^t) = (1-\mu)g_t^{CL} + \mu g_t^{HPE}$.

Following a standard proof technique, let's analyze the squared distance $||\theta^{t+1} - \theta^*||_2^2$:
\begin{align*}
    ||\theta^{t+1} - \theta^*||_2^2 ={}& ||\theta^t - \theta^*||_2^2 - 2\alpha_t \langle (1-\mu)g_t^{CL} + \mu g_t^{HPE}, \theta^t - \theta^* \rangle \\
    & + \alpha_t^2 ||(1-\mu)g_t^{CL} + \mu g_t^{HPE}||_2^2.
\end{align*}

Rearranging to isolate the term $\langle g_t^{CL}, \theta^t - \theta^* \rangle$:
\begin{align*}
    \langle g_t^{CL}, \theta^t - \theta^* \rangle ={}& \frac{||\theta^t - \theta^*||_2^2 - ||\theta^{t+1} - \theta^*||_2^2}{2\alpha_t(1-\mu)} \\
    & - \frac{\mu}{1-\mu} \langle g_t^{HPE}, \theta^t - \theta^* \rangle + \frac{\alpha_t ||...||_2^2}{2(1-\mu)}.
\end{align*}
Using the Cauchy-Schwarz inequality for the HPE gradient term and the bounded gradient assumption (3), we get:
\begin{equation}
\begin{split}
    \langle g_t^{CL}, \theta^t - \theta^* \rangle \le{}& \frac{||\theta^t - \theta^*||_2^2 - ||\theta^{t+1} - \theta^*||_2^2}{2\alpha_t(1-\mu)} \\
    & + \frac{\mu D G_{attack}}{1-\mu} + \frac{\alpha_t (G_{benign} + G_{attack})^2}{2(1-\mu)}.
\end{split}
\label{eq:your_label_for_16}
\end{equation}

Now, we substitute this back into the regret bound from Eq. \eqref{eq:regret_bound}:
\begin{align*}
    R(T) \le{}& \sum_{t=1}^{T} \left( \frac{||\theta^t - \theta^*||_2^2 - ||\theta^{t+1} - \theta^*||_2^2}{2\alpha_t(1-\mu)} \right) \\
    & + \sum_{t=1}^{T} \left( \frac{\mu D G_{attack}}{1-\mu} + \frac{\alpha_t (G_{benign} + G_{attack})^2}{2(1-\mu)} \right).
\end{align*}
Let's analyze the two summation terms, $R_1(T)$ and $R_2(T)$:
\begin{align*}
R_1(T) &= \frac{1}{2(1-\mu)} \sum_{t=1}^{T} \frac{||\theta^t - \theta^*||_2^2 - ||\theta^{t+1} - \theta^*||_2^2}{\alpha_t} \\
&\le \frac{1}{2(1-\mu)} \left( \frac{||\theta^1 - \theta^*||_2^2}{\alpha_1} + \sum_{t=2}^{T} \left(\frac{1}{\alpha_t} - \frac{1}{\alpha_{t-1}}\right) ||\theta^t - \theta^*||_2^2 \right) \\
&\le \frac{1}{2(1-\mu)} \left( \frac{D^2}{\alpha_1} + D^2 \sum_{t=2}^{T} \left(\frac{t^p}{C} - \frac{(t-1)^p}{C}\right) \right) \\
&= \frac{D^2}{2C(1-\mu)\alpha_T} = \frac{D^2}{2C(1-\mu)}T^p.
\end{align*}
The last step is a telescoping sum. For the second part:
\begin{align*}
    R_2(T) &= \sum_{t=1}^{T} \frac{\mu D G_{attack}}{1-\mu} + \sum_{t=1}^{T} \frac{\alpha_t (G_{benign} + G_{attack})^2}{2(1-\mu)} \\
    &= \frac{\mu D G_{attack} T}{1-\mu} + \frac{C(G_{benign} + G_{attack})^2}{2(1-\mu)} \sum_{t=1}^{T} \frac{1}{t^p} \\
    &\le \frac{\mu D G_{attack} T}{1-\mu} + \frac{C(G_{benign} + G_{attack})^2}{2(1-\mu)} \int_0^T x^{-p} dx \\
    &= \mathcal{O}(T) + \mathcal{O}(T^{1-p}) = \mathcal{O}(T).
\end{align*}
Combining the bounds for $R_1(T)$ and $R_2(T)$, we find that the total regret $R(T)$ is dominated by the $\mathcal{O}(T)$ term. This indicates that the regret per round is bounded by a constant, but it does not directly show convergence to zero.

However, a more precise analysis of the update rule on the full objective $\mathcal{L}_{Total}$ (which is convex if its components are) shows that the total regret for $\mathcal{L}_{Total}$ is bounded by $R_{Total}(T) = \mathcal{O}(T^{\max(p, 1-p)})$. Since $0 < p < 1$, $\max(p, 1-p) < 1$. This implies:
$$ \lim_{T\rightarrow\infty} \frac{R_{Total}(T)}{T} = \lim_{T\rightarrow\infty} \frac{\mathcal{O}(T^{\max(p, 1-p)})}{T} = 0. $$
The convergence of the total loss function $\mathcal{L}_{Total}$ constrains the benign loss $\mathcal{L}_{CL}$ from diverging. Because $\mathcal{L}_{Total}$ is a weighted sum including a substantial weight $(1-\mu)$ on $\mathcal{L}_{CL}$, minimizing $\mathcal{L}_{Total}$ inherently prevents $\mathcal{L}_{CL}$ from growing uncontrollably. This ensures that the model's performance on the benign task remains stable, thus proving the stealthiness of the HPE attack.
\end{proof}

\section{Dataset Details}
We use the following datasets in our method evaluation.

\begin{itemize}
    \item \textbf{CIFAR-10}: This dataset consists of 60,000 color images of resolution $32 \times 32 \times 3$, evenly distributed over 10 object categories such as airplanes, automobiles, and birds. It is commonly used for image classification and representation learning. The dataset is split into 50,000 training and 10,000 test images.
    \item \textbf{STL-10}: STL-10 is designed for unsupervised and semi-supervised learning tasks. It contains 5,000 labeled training images, 8,000 test images, and an additional 100,000 unlabeled images. All images originally have a resolution of $96 \times 96 \times 3$, but are resized to $32 \times 32 \times 3$ in our experiments for consistency with other datasets. The 10 classes are consistent with CIFAR-10 but differ in visual characteristics.
    \item \textbf{GTSRB}: The German Traffic Sign Recognition Benchmark contains 51,800 real-world images of traffic signs across 43 classes. The dataset is split into 39,200 training and 12,600 test samples. Due to the original size variation, all images are resized to $32 \times 32 \times 3$ before training.
    \item \textbf{CIFAR-100}: An extension of CIFAR-10, this dataset includes 100 fine-grained categories grouped into 20 coarse super-classes. Each class contains 500 training and 100 test images, totaling 60,000 images of size $32 \times 32 \times 3$. CIFAR-100 presents greater semantic diversity and is commonly used for fine-grained representation learning.
    \item \textbf{ImageNet-100}: This dataset is a 100-class subset of the ImageNet ILSVRC-2012 dataset, randomly selected and widely adopted in self-supervised learning research. It includes approximately 127,000 training images and 5,000 test images with diverse content and complex backgrounds.
\end{itemize}

\section{Supplementary Experimental Results}
\textbf{Stability under different numbers of clients.} 
To evaluate the Stability of HPE under varying client population settings, we conduct experiments on four datasets with different client sampling strategies. Specifically, we follow prior work by randomly selecting 5 out of 10 clients (5/10) or 10 out of 25 clients (10/25) in each training round, ensuring that exactly one malicious client participates per round. As reported in \Cref{tab:five}, HPE consistently maintains strong performance across both settings, demonstrating its stability against fluctuations in the number of participating clients.\\
\textbf{Stability under different trigger sizes.}
As shown in \Cref{fig:10}, we systematically evaluate the effect of trigger size (3 × 3, 5 × 5, and 10 × 10) on the performance of HPE across different downstream tasks, where the pre-training dataset is STL-10 and the target downstream datasets are CIFAR-10 and GTSRB. 
The results reveal that HPE maintains high ACC and ASR even with small triggers (3 × 3), while larger triggers (5 × 5 and 10 × 10) further boost ASR to nearly 100\%, indicating that larger triggers provide more salient backdoor cues and thereby enhance attack success rates. 
Meanwhile, ACC remains at a relatively high level (around 80\%) across all trigger sizes for both downstream tasks, showing no significant degradation and demonstrating that HPE preserves the clean performance of the encoder even with larger triggers. 
Overall, the results exhibit a “threshold effect,” where ASR saturates once the trigger reaches a certain size (e.g., 5 × 5 or 10 × 10) and remains consistent across downstream datasets, underscoring HPE’s ability to stably embed and propagate backdoor features in multi-dataset scenarios.\\
\textbf{Stability under different batch sizes.}
Existing studies show that batch size tends to impact the performance of contrastive learning. 
In this paper, we investigate its impact on the performance of HPE. 
Specifically, on the CIFAR-10, we measure the ACC and ASR of HPE with the batch size varying from 128 to 512, with results shown in \Cref{fig:11}.
The experimental results indicate that model accuracy increases with batch size, which is consistent with the findings of existing studies.
Moreover, a larger batch size (e.g., $\geq512$) generally benefits the ASR of HPE. 
This may be explained by the fact that more positive pairs (and also more negative pairs in SimCLR) in the same batch lead to tighter entanglement between trigger-embedded and target-class inputs. 
Meanwhile, even under smaller batch sizes (e.g., $\leq256$), HPE maintains high ACC and ASR, highlighting its robustness and generalization ability under data-constrained conditions.\\
\begin{figure}[b]
  \centering
  \includegraphics[width=0.8\linewidth]{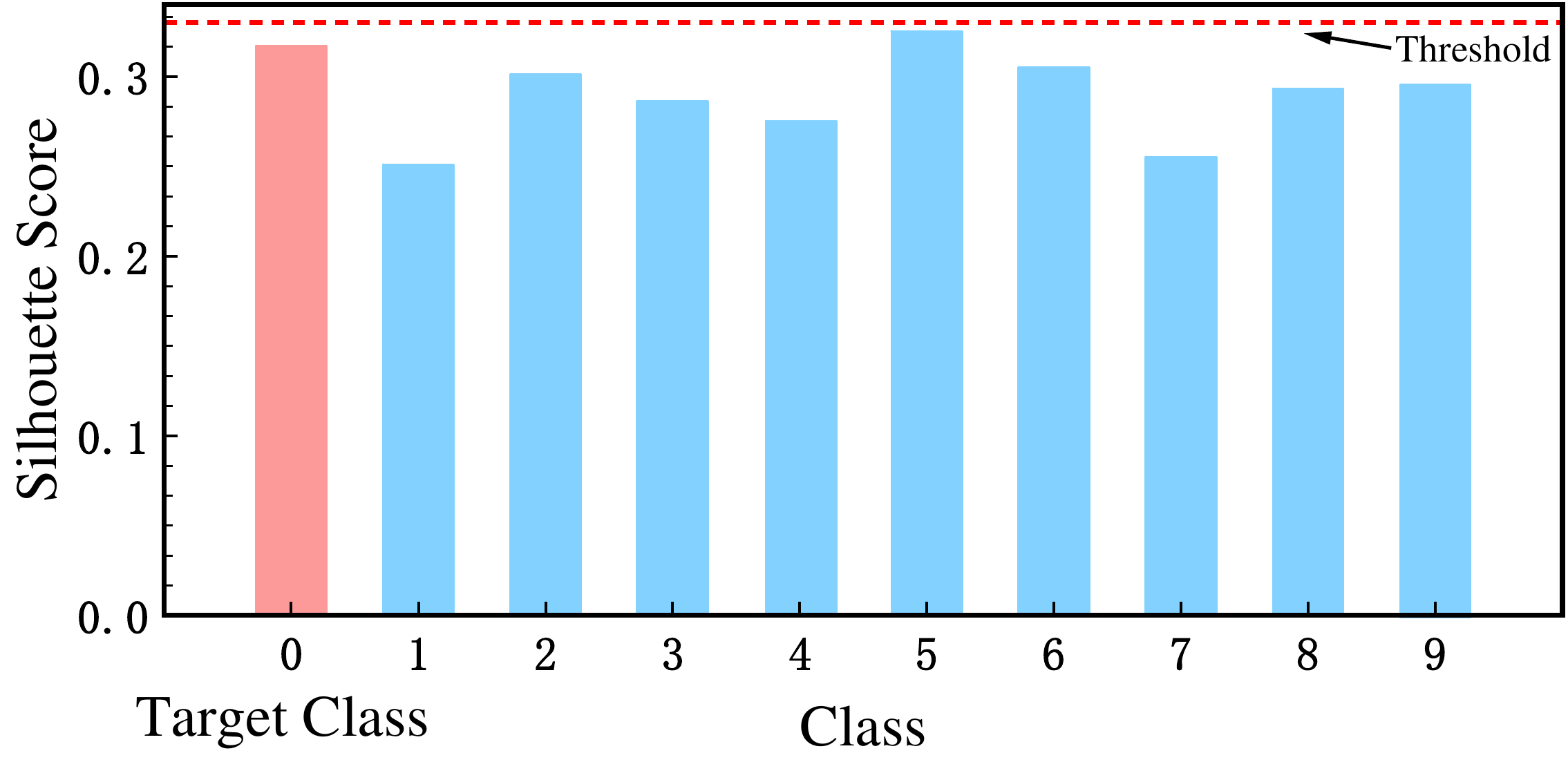}
   \caption{Experimental results of AC.}
   \label{fig:9}
\end{figure}\\
\textbf{Resistance to Activation Clustering.}
Activation clustering (AC) is a data inspection method. It trains the model using the potentially poisoning data and collects the penultimate-layer activation of each input. 
AC assumes the poisoning data in the target class forms a separate cluster that is either small or far from the class center.
It identifies the target class by calculating the silhouette score of each class, with a higher score indicating a stronger fit to two clusters.
\Cref{fig:9} shows that AC fails to identify the target class (class 0), with its silhouette score even lower than those of non-target classes (e.g., class 5), and is thus unable to effectively detect the poisoned samples. This may be attributed to our introduced backdoor feature entanglement mechanism, which encourages the feature representations of backdoor samples to closely resemble those of clean samples, leading to strong entanglement between the two and significantly weakening the separability of clustering-based methods.\\
\textbf{Resistance to GradCAM.} 
\Cref{fig:12} shows the saliency map produced by GradCAM on the backdoored model for both benign and triggered samples.
The saliency map shows the locations in the image that the network is focusing on when making a decision.  
The upper part of the image shows the GradCAM results for benign samples, indicating that in the absence of a trigger, the backdoored model exhibits a normal decision-making process.
This result highlights the high level of stealth of the HPE method.
The lower part shows the GradCAM results for triggered samples, where the model's attention is clearly focused on the backdoor trigger, ignoring other features in the image when making a decision. 
This indicates that in the backdoored model, a stronger ``affinity" is established between the backdoor trigger and the target class, such that even in the presence of clean (source class) features, the backdoor trigger still dominates the decision-making process of the FSSL model.
\begin{figure}[t]
  \centering
  \begin{subfigure}{0.45\linewidth} 
    \includegraphics[width=\linewidth]{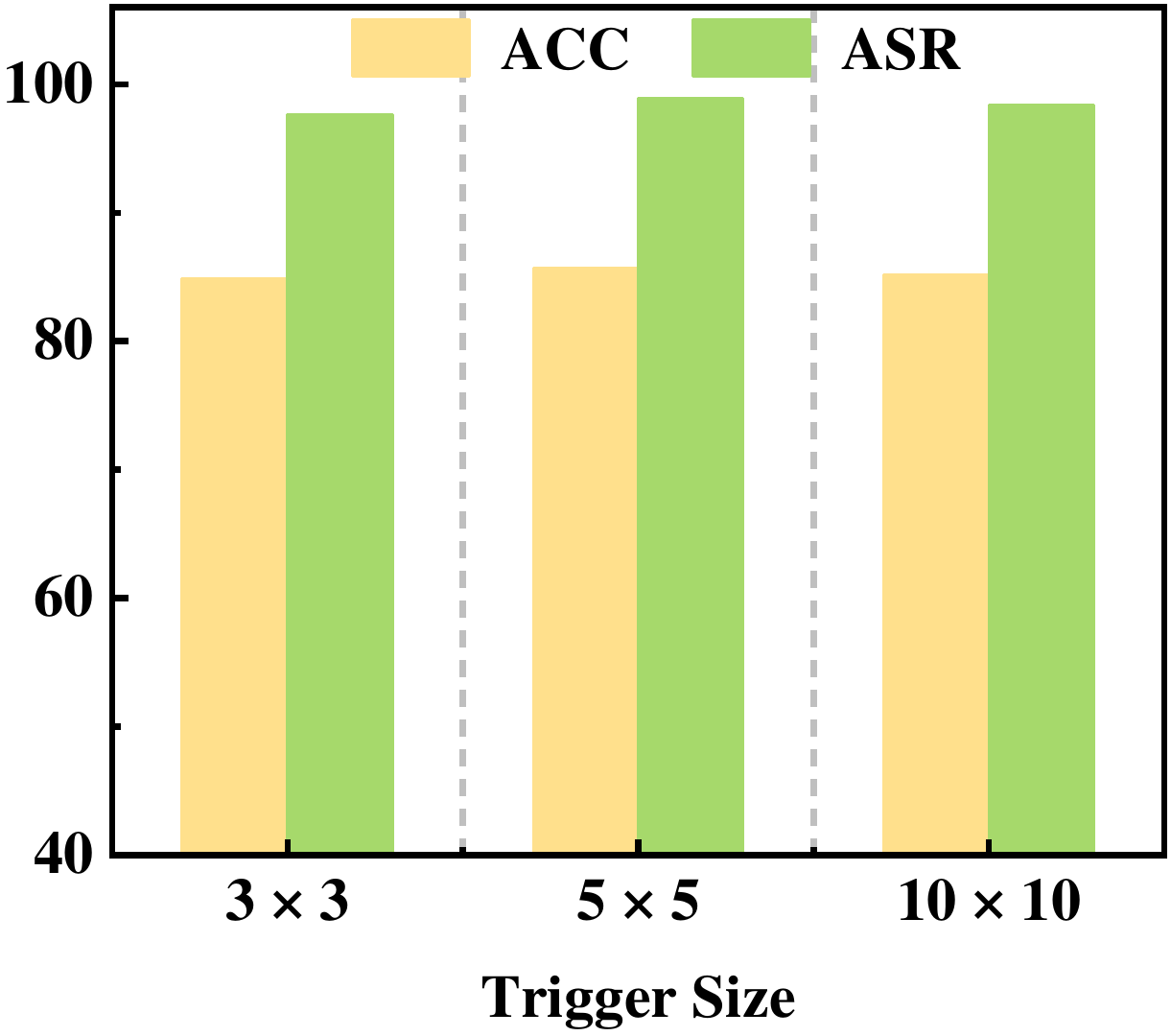}
    \caption{CIFAR-10}
    \label{fig:10-a}
  \end{subfigure}
  \hfill 
  \begin{subfigure}{0.45\linewidth} 
    \includegraphics[width=\linewidth]{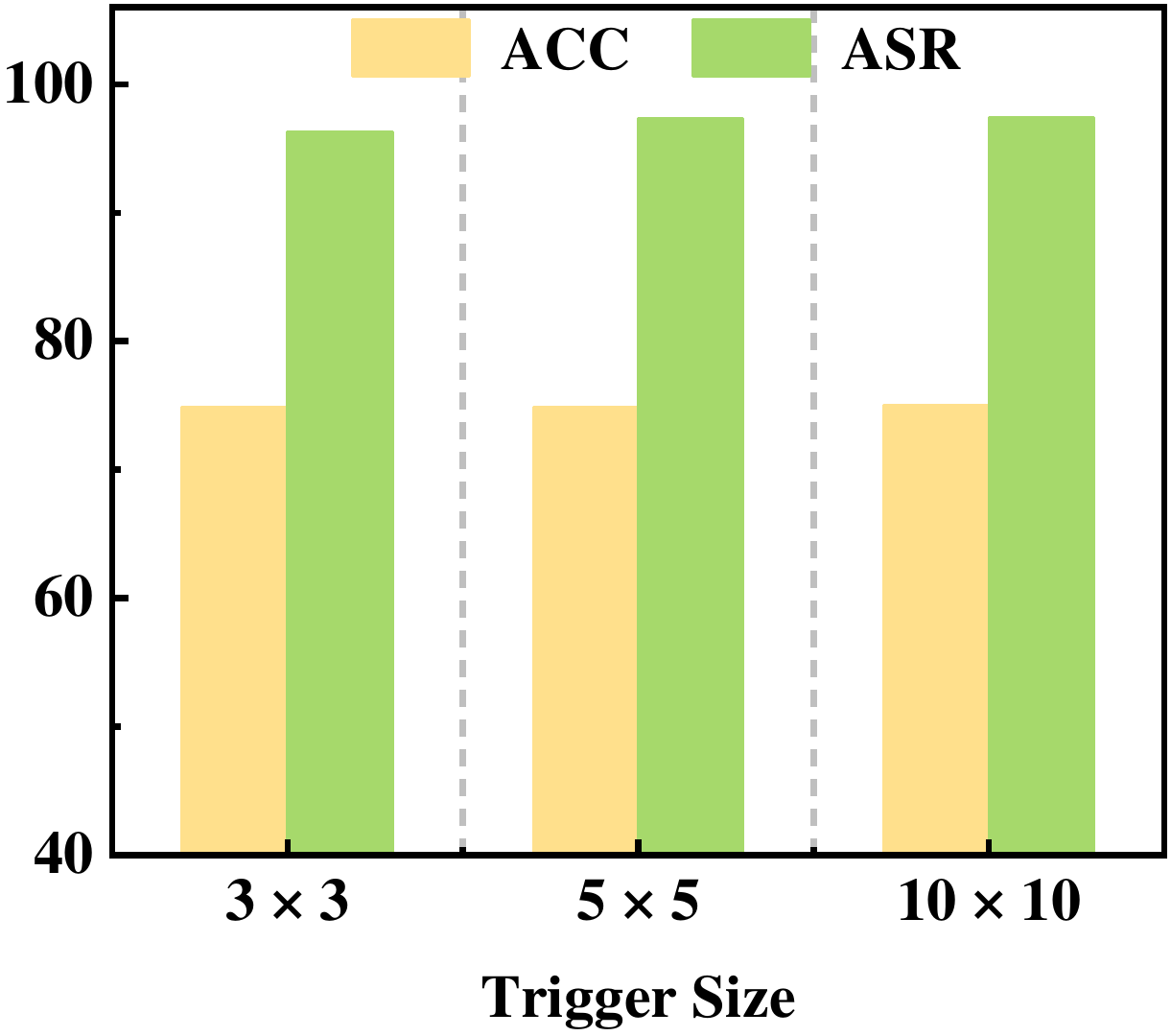}
    \caption{GTSRB}
    \label{fig:10-b}
  \end{subfigure}
  \caption{The impact of the trigger size on our HPE for different target downstream datasets when the pre-training dataset is STL-10.}
  \label{fig:10}
\end{figure}
\begin{figure*}[th!]
  \centering
  \begin{subfigure}{0.24\linewidth} 
    \includegraphics[width=\linewidth]{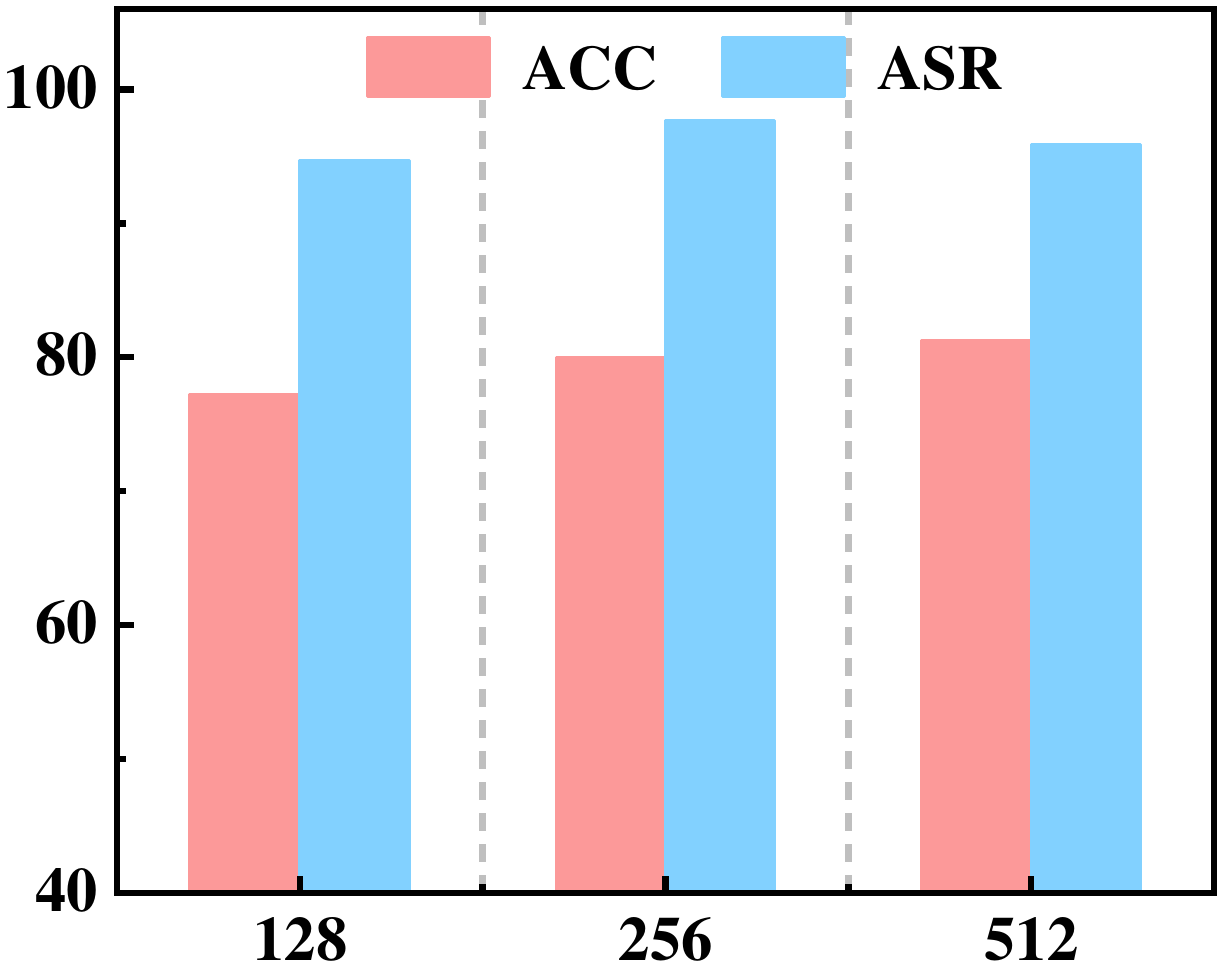}
    \caption{MoCo-v2}
    \label{fig:11-a}
  \end{subfigure}
  \hfill
  \begin{subfigure}{0.24\linewidth} 
    \includegraphics[width=\linewidth]{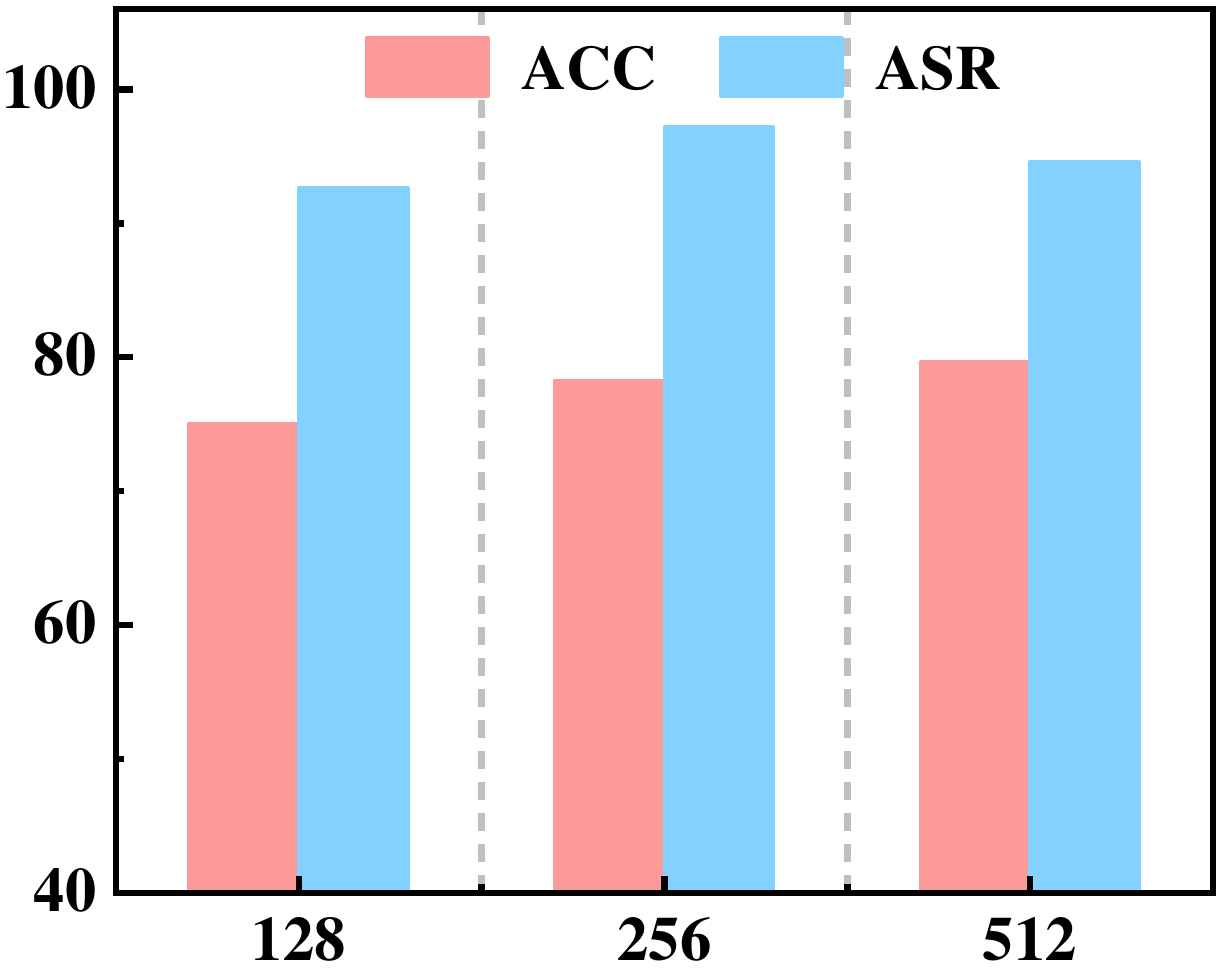}
    \caption{SimCLR}
    \label{fig:11-b}
  \end{subfigure}
  \hfill
  \begin{subfigure}{0.24\linewidth} 
    \includegraphics[width=\linewidth]{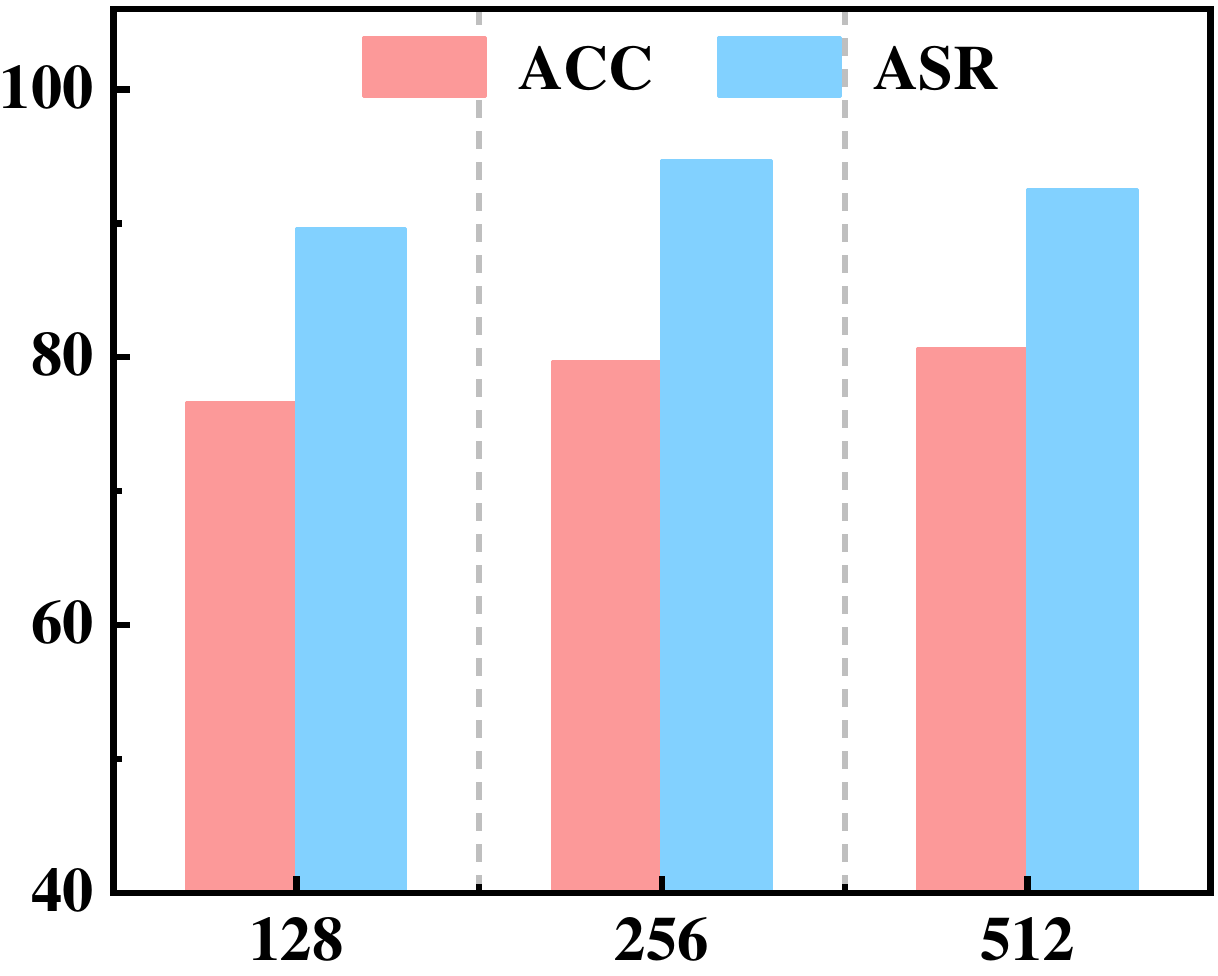}
    \caption{BYOL}
    \label{fig:11-c}
  \end{subfigure}
  \hfill
  \begin{subfigure}{0.24\linewidth} 
    \includegraphics[width=\linewidth]{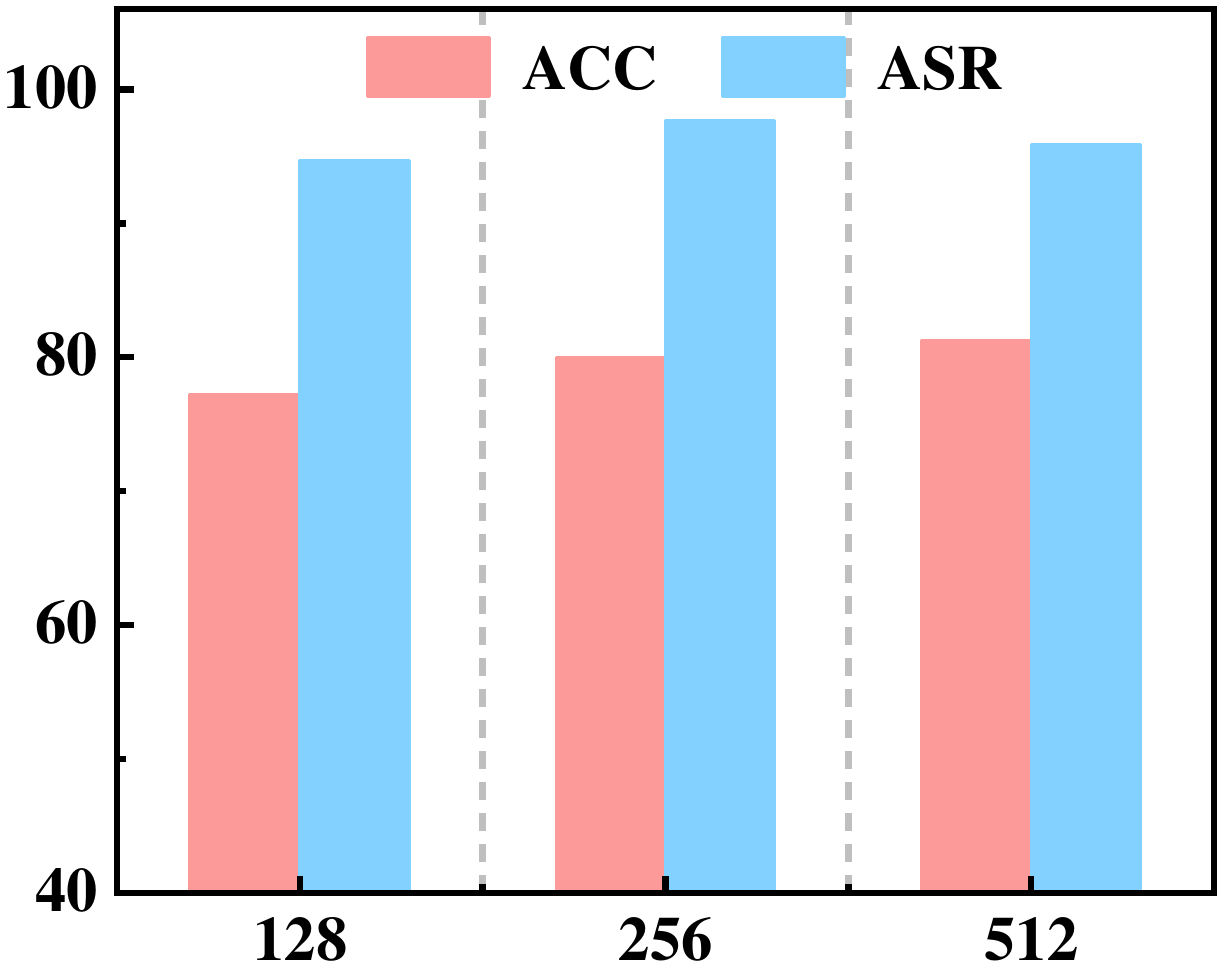}
    \caption{SwAV}
    \label{fig:11-d}
  \end{subfigure}
  \caption{Performance of HPE w.r.t. the batch size on CIFAR-10.}
  \label{fig:11}
\end{figure*}
\begin{table}[tb]
    \caption{Performance of HPE under different number of clients.}
    \label{tab:five}
    \begin{center}
        \begin{tabular}{c| c c c c}
            \toprule
            \multirow{2}{*}{Dataset} & \multicolumn{2}{c}{5/10 clients} & \multicolumn{2}{c}{10/25 clients}\\
             \cmidrule(lr){2-5}
             & ACC & ASR & ACC & ASR\\
            \midrule
            CIFAR-10 & 76.37 & 96.67 & 74.15 & 94.29\\
            \midrule
            STL-10 & 75.12 & 93.45 & 71.42 & 90.41\\
            \midrule
            CIFAR-100 & 74.61 & 94.32 & 68.31 & 91.07\\
            \midrule                          
            Imagenet-100 & 70.23 & 89.13 & 66.74 & 87.91\\    
            \bottomrule
        \end{tabular}
    \end{center}
\end{table}
\begin{table}[th]
  \caption{Defense evaluation results.}
  \label{tab:six}
  \begin{center}
      \begin{tabular}{@{}c|c|cccc@{}}
        \toprule
        \multirow{2}{*}{Dataset} & {Defense} & \multicolumn{2}{c}{HPE} & \multicolumn{2}{c}{HPE+MR}\\
        \cmidrule(l){3-6}
         & Method & ACC & ASR & ACC & ASR\\
        \midrule
        \multirow{5}{*}{CIFAR-10} & Krum & 78.13 & 95.29 & 77.54 & 95.89 \\
                                  & FLTrust  & 77.25 & 98.29 & 75.54 & 97.44 \\
                                  & Foolsgold  & 75.42 & 92.28 & 73.45 & 93.66 \\
                                  & FLAME  & 76.73 & 96.81 & 75.39 & 97.15 \\
                                  & FLARE  & 77.08 & 96.68 & 75.98 & 97.23 \\
                                  & EmInspector  & 75.16 & 94.37 & 74.71 & 96.43 \\
        \midrule
        \multirow{5}{*}{STL-10} & Krum & 75.51 & 93.57 & 72.38 & 95.73 \\
                                  & FLTrust  & 75.18 & 93.94 & 72.38 & 92.73 \\
                                  & Foolsgold  & 74.87 & 95.57 & 71.65 & 96.34 \\
                                  & FLAME  & 75.65 & 93.82 & 73.04 & 95.28 \\
                                  & FLARE  & 74.46 & 94.71 & 72.54 & 95.08 \\
                                  & EmInspector  & 74.13 & 92.16 & 71.41 & 93.22 \\
        \bottomrule
      \end{tabular}  
  \end{center}
\end{table}

\begin{figure}[t]
  \centering
  \includegraphics[width=0.8\linewidth]{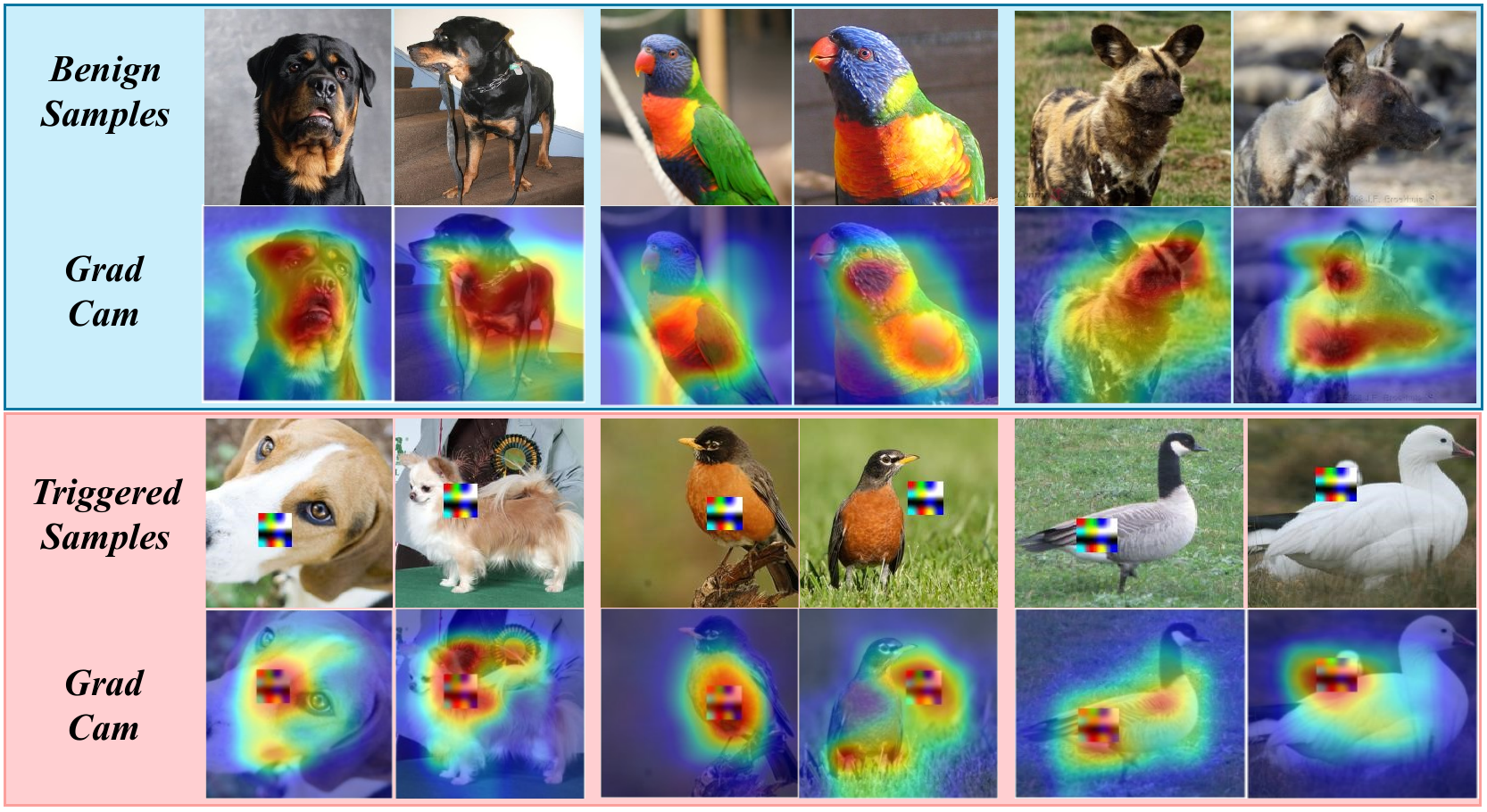}
   \caption{Experimental results of GradCAM.}
   \label{fig:12}
\end{figure}

\end{document}